\newcommand{\zfps}{\mathbb{Z}\langle \langle A \rangle \rangle} 
\newcommand{\N}{{\mathbb{N}}}  
\newcommand{\Z}{{\mathbb{Z}}}  
\newcommand{\Q}{{\mathbb{Q}}} 
\newcommand{\F}{{\mathbb{F}}}  
\newcommand{\0}{\mathbf{0}} 
\newcommand{\pfa}{\mathcal{P}}
\newcommand{\alp}[1]{\Sigma_{#1}}
\theoremstyle{plain}
\newtheorem{prob}[theorem]{Problem}
\title{Polynomially Ambiguous Probabilistic Automata on Restricted Languages} 
\author{Paul C. Bell}{Department of Computer Science, Byrom Street, Liverpool~John~Moores~University, Liverpool, L3-3AF, UK}{p.c.bell@ljmu.ac.uk}{ https://orcid.org/0000-0003-2620-635X}{}
\authorrunning{P.\,C. Bell}
\keywords{Probabilistic finite automata; ambiguity; undecidability; bounded language; formal language theory.}
\begin{document}

\maketitle

\begin{abstract} We consider the computability and complexity of decision questions for Probabilistic Finite Automata (PFA) with sub-exponential ambiguity. We show that the emptiness problem for strict and non-strict cut-points of polynomially ambiguous commutative PFA remains undecidable, implying that the problem is undecidable when inputs are from a letter monotonic language. We show that the problem remains undecidable over a binary input alphabet when the input word is over a bounded language, in the noncommutative case. In doing so, we introduce a new technique based upon the Turakainen construction of a PFA from a Weighted Finite Automata which can be used to generate PFA of lower dimensions and of subexponential ambiguity. We also study freeness/injectivity problems for polynomially ambiguous PFA and study the border of decidability and tractability for various cases. 
\end{abstract}

\section{Introduction}
Probabilistic Finite Automata (PFA) are a simple yet expressive model of computation, obtained by extending Nondeterministic Finite Automata (NFA) so that transitions  from each state (and for each input letter) form probability distributions. As input letters are read from some alphabet $\Sigma$, the automaton transitions among states according to these probabilities. The probability of a PFA $\pfa$ accepting a word $w \in \Sigma^*$ is given by the probability of the automaton being in one of its final states, denoted $f_{\mathcal{P}}(w) = {\bf x}^TM_{w_1}M_{w_{2}} \cdots M_{w_k} {\bf y}$, where ${\bf x}$ represents the initial state, ${\bf y}$ represents the final state and each $M_{w_i}$ is a row stochastic matrix representing the transition probabilities for letter $w_i \in \Sigma$.

The PFA model has been studied extensively over the years, ever since its introduction by Rabin \cite{Ra63}; for example see \cite{Bu80} for a survey of $416$ research papers related to PFA in the eleven years since their introduction to just 1974. They have been used to study Arthur-Merlin games \cite{BM88}, space bounded interactive proofs \cite{CJ89}, quantum complexity theory \cite{YS11}, the joint spectral radius and semigroup boundedness \cite{BT00}, Markov decision processes and planning questions \cite{BT00b}, and text and speech processing \cite{MPR02} among many other applications.

There are a variety of interesting questions that one may ask about PFA. A central question is the \emph{emptiness problem} for cut-point languages; given some probability $\lambda \in [0, 1]$, does there exist a finite input word whose probability of acceptance is greater than $\lambda$ (i.e. does there exist $w \in \Sigma^*$ such that $f_{\mathcal{P}}(w) > \lambda$, see Section~\ref{pfa-def-sec}). This problem is known to be undecidable \cite{Paz}, even for a fixed number of dimensions and for two input matrices \cite{BC03, Hir}. A second natural question is the \emph{freeness problem} (or \emph{injectivity problem}) for PFA, studied in \cite{BCJ16} - given a PFA $\pfa$ over alphabet $\Sigma$ determine whether the acceptance function $f_{\mathcal{P}}(w)$ is injective (i.e. do there exist two distinct words with the same acceptance probability). 

When studying the frontiers of decidability of a problem, there are two competing objectives, namely, determine the most general version of the problem which is decidable, and the most restricted specialization which is undecidable; the latter being the main focus of this paper. 

Various classes of restrictions may be studied for PFA, depending upon the structure of the PFA or on possible input words. Some restrictions relate to the number of states of the automaton, the alphabet size and whether one defined the PFA over the algebraic real numbers or the rationals. One may also study PFA with finite, polynomial or exponential ambiguity (in terms of the underlying NFA), PFA defined for restricted input words (for example those coming from regular, bounded or letter monotonic languages), PFA with isolated thresholds (a probability threshold is isolated if it cannot be approached arbitrarily closely) and commutative PFA, where all transition matrices commute, for which cut-point languages and non-free languages generated by such automata necessarily become commutative.

The cut-point emptiness problem for PFA is known to be undecidable for rational matrices \cite{Paz}, even over a binary alphabet when the PFA has dimension $46$ in \cite{BC03}; later improved to dimension $25$ \cite{Hir}. The authors of \cite{BMT77} show that the problem of determining if a threshold is isolated (resp. if a PFA has any isolated threshold) is undecidable and this was shown to hold even for PFA with $420$ (resp. $2354$) states over a binary alphabet \cite{BC03}.

A natural restriction on PFA was studied in \cite{BHH13}, where possible input words of the PFA are restricted to be from some letter monotonic language of the form $\mathcal{L} = a_1^*a_2^* \cdots a_k^*$ with each $a_i \in \Sigma$ (analogous to a 1.5 way PFA, whose read head may ``stay put'' on an input word letter but never moves left). In other words, we ask if there exists some  $w \in \mathcal{L}$ such that $f_{\mathcal{P}}(w) > \lambda$. This restriction is inspired by the well-known property that many language-theoretic problems become decidable or tractable when restricted to bounded languages, and especially letter monotonic languages \cite{CH14}. Nevertheless, the emptiness problem for PFA on letter monotonic languages was shown to be undecidable for high (but finite) dimensional matrices over the rationals via an encoding of Hilbert's tenth problem on the solvability of Diophantine equations and the utilization of Turakainen's method to transform weighted integer automata to probabilistic automata \cite{Tu69}. 

The authors of \cite{FR17} recently studied decision problems for PFA of various degrees of ambiguity in order to map the frontier of decidability for restricted classes of PFA. The degree of ambiguity of a PFA is a structural property, giving an indication of the number of accepting runs for a given input word and it can be used to give various classifications of ambiguity including finite, polynomial and exponential ambiguity (formal details are given in Section~\ref{amb-sec}). The ambiguity of a PFA is a property of the underlying NFA and is independent of the transition probabilities in so much as we only need care if the probability is zero or positive. The degree of ambiguity of automata is a well-known and well-studied property in automata theory \cite{WS91}. The authors of \cite{FR17} show that the emptiness problem for PFA remains undecidable even for polynomially ambiguous automata (quadratic ambiguity), before going on to show {\bf PSPACE}-hardness results for finitely ambiguous PFA and that emptiness is in {\bf NP} for the class of $k$-ambiguous PFA for every $k > 0$. The emptiness problem for PFA was later shown to also be undecidable even for linearly ambiguous automata in \cite{DJ18}.

\subsection{Our Contributions}

In this paper, we show that the strict and nonstrict emptiness problems are undecidable even for polynomially ambiguous commutative PFA when all matrices are rational. This implies that undecidability holds even when the input words come from a letter monotonic language (since the order of input words is irrelevant, only the number of occurrences  of each letter is important). This combination of restrictions on the PFA significantly increases the difficulty of proving undecidability. The study of PFA over letter monotonic languages is a particularly interesting intermediate model, lying somewhere between single letter alphabets (equivalent to Skolem's problem \cite{HY16}) and PFA defined with multi-letter alphabets, for which most decision problems are undecidable. We also show that the problem remains undecidable even for binary input alphabets, although we only obtain the result for noncommutative PFA and when the input words are from bounded, rather than letter monotonic, languages.

\begin{theorem}\label{mainThm}
The emptiness problem for polynomially ambiguous commutative probabilistic finite automata (and thus when inputs are restricted to letter monotonic languages) is undecidable for strict/non-strict cut-points. The problem remains undecidable for a binary alphabet if letter monotonic languages are replaced by bounded languages and we remove the commutativity restriction on the PFA.
\end{theorem}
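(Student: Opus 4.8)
The plan is to build a polynomially ambiguous commutative PFA whose cut-point emptiness problem encodes an undecidable problem about Diophantine equations, following the general strategy pioneered in \cite{BHH13} but now carefully controlling the ambiguity of the construction. First I would start from Hilbert's tenth problem: given a polynomial $P(x_1,\dots,x_n) \in \Z[x_1,\dots,x_n]$, it is undecidable whether $P$ has a solution in $\N^n$. The idea is to design a weighted finite automaton (WFA) over the letter monotonic language $a_1^* a_2^* \cdots a_n^*$ (or a slight variant with a few extra letters) whose value on the word $a_1^{x_1} a_2^{x_2} \cdots a_n^{x_n}$ is some fixed affine function of $P(x_1,\dots,x_n)$, for instance $f(w) = c - P(x_1,\dots,x_n)^2$ for a suitable constant $c$; then $P$ has a natural solution iff the supremum of $f$ over the language reaches $c$, which translates to a (non)strict cut-point emptiness question. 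The counting of letter occurrences is handled by diagonal (or unipotent Jordan) blocks acting on the letter $a_i$, whose entries after reading $a_i^{x_i}$ become polynomials in $x_i$; multiplying such blocks yields a monomial in the $x_i$'s, and a sum of such automata realizes an arbitrary polynomial. Commutativity is automatic because each letter $a_i$ acts on its ``own'' coordinates as a single fixed matrix that can be chosen simultaneously (block-)upper-triangular with the others, or more simply because in the letter monotonic setting one only ever reads the letters in the fixed order $a_1,\dots,a_n$.

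The crucial new ingredient is controlling the ambiguity. A naive Turakainen-style transformation from a WFA to a PFA (adding a ``sink'' structure and rescaling so that all rows become stochastic) tends to destroy sparsity of the underlying NFA and can blow the ambiguity up to exponential. So the key step is to give a refined version of the Turakainen construction --- announced in the abstract and presumably developed in an earlier section --- that takes a polynomially ambiguous WFA to a polynomially ambiguous PFA of controlled (ideally only slightly larger) dimension. Concretely, I would: (i) argue that the WFA built above has polynomial ambiguity, because its accepting runs on $a_1^{x_1}\cdots a_n^{x_n}$ correspond to choices of which Jordan-block ``levels'' are traversed while reading each $a_i$, a count that is polynomial in $|w|$; (ii) apply the refined Turakainen construction, checking that the added states and transitions only contribute a bounded (constant or linear) amount of extra ambiguity, so the resulting PFA is still polynomially ambiguous; (iii) verify that the affine relationship between the WFA value and the PFA acceptance probability preserves the cut-point threshold, so the reduction goes through. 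For the binary-alphabet statement, I would then compose this with a standard block-encoding of the $n$-letter alphabet into $\{0,1\}$ via a bounded language such as $(0^{*}1)^{*}$-type patterns (or $0^{i}1$ blocks), at the cost of sacrificing commutativity but keeping the word within a bounded language and keeping ambiguity polynomial.

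The steps would be carried out in this order: (1) recall Hilbert's tenth problem and reduce it to deciding whether $\sup_{w \in \mathcal L} g(w) \bowtie \lambda$ for a WFA value $g$ and $\bowtie \in \{>, \geq\}$; (2) construct the commutative, polynomially ambiguous WFA computing the desired polynomial via diagonal/Jordan blocks, one per variable, summed appropriately; (3) invoke the refined Turakainen construction to pass to a PFA, tracking dimension and ambiguity; (4) conclude undecidability of both strict and non-strict emptiness on letter monotonic languages; (5) for the binary case, encode the alphabet into $\{0,1\}$ over a bounded language, re-examining ambiguity and dropping commutativity. I expect step (3) --- the ambiguity-preserving Turakainen transformation --- to be the main obstacle, since the classical construction is stated only for dimension/probability control and not for structural properties of the underlying NFA; making it ambiguity-aware (and ideally dimension-efficient) is exactly the technical novelty the abstract advertises. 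A secondary subtlety is ensuring the polynomial can be made \emph{nonnegative} with its infimum/supremum attained exactly at the cut-point, which is handled by the standard trick of working with $c - P^2$ (or $1/(1+P^2)$ after normalization) and taking care that the relevant extremum is approached only along genuine solutions, so that the distinction between strict ($>$) and non-strict ($\geq$) cut-points both yield undecidability.
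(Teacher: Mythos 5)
Your high-level strategy matches the paper's: reduce from Hilbert's tenth problem, use unipotent/Jordan-type blocks so that reading $a_i^{x_i}$ produces entries polynomial in $x_i$, combine blocks by Kronecker products and sums to realize the polynomial, and only then worry about stochasticity and ambiguity. However, the central technical step is missing. You explicitly defer the conversion from weighted to probabilistic automaton to a ``refined Turakainen construction'' that you assume exists and preserves polynomial ambiguity --- but that construction is exactly the novelty the theorem requires, and you do not supply it. The paper does \emph{not} repair Turakainen's method; it sidesteps it entirely by designing the integer matrices $A_i$ from the outset with an extra padding column so that \emph{every row sum equals $2$}. Then every $d$-fold Kronecker product has uniform row sum $2^d$, and a single scalar division by $2^d$ makes the matrices row stochastic without introducing any new nonzero entries. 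Since the support of the underlying NFA is unchanged and the matrices are upper-triangular, polynomial ambiguity is preserved (the paper's Lemma on upper-triangular PFA). Without this uniform-row-sum device (or an equivalent one), your step (iii) has no content, and your step (i) about counting ``Jordan-block levels'' does not by itself yield a stochastic automaton.

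Three further points are asserted rather than proved in your proposal and each requires a specific argument in the paper. First, negative coefficients of the polynomial: normalizing a nonnegative-entry WFA cannot directly represent terms with negative coefficients; the paper handles these with Bertoni's trick of replacing the final vector $v_j$ by $\mathbf{1}-v_j$ for the negative terms, which converts $-|c_j|R_j$ into $|c_j|2^{d|w|}-|c_j|R_j$ and folds the offset into the cut-point $\lambda = g'/g$. Second, commutativity is not ``automatic'': the paper's matrices $A_i$ and $A_{i+1}$ do \emph{not} commute, and one must re-index the variables to $A_0, A_2, \ldots, A_{2t}$ (doubling the base dimension) to obtain a genuinely commutative generating set; being defined over a letter monotonic language is a strictly weaker property than having commuting transition matrices, and the theorem claims the latter. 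Third, the passage from non-strict to strict cut-points is not a matter of ``taking care'': it uses the integrality gap --- $P^h(1,x_1,\ldots,x_t)$ is a nonnegative integer, so $f_{\pfa}(w)$ either equals $\lambda$ or exceeds it by at least $1/(g2^{d|w|})$ --- together with an auxiliary three-state gadget that converts this gap into a strict inequality against a new threshold $\tfrac12(\lambda+1)$. Your binary-alphabet step is essentially the paper's (a direct sum of the letters' matrices plus a cyclic permutation matrix, over a bounded language), and is fine in outline.
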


We note a few difficulties with proving this result. Firstly, Post's correspondence problem, whose variants are often used for showing undecidability results in such settings, is actually decidable over letter monotonic languages \cite{Halava2009}\footnote{Although it is undecidable in general (i.e. not over a letter monotonic language) with an alphabet with at least five letters \cite{Neary2015}.}. Secondly, although other reductions of undecidable computational problems to matrices are possible, the standard technique of Turakainen (shown in \cite{Tu69}) to modify such matrices to stochastic matrices introduces exponential ambiguity (indeed all such matrices are strictly positive, and thus we might think of such matrices as being \emph{maximally exponentially ambiguous})\footnote{This is due to an essential step of the Turakainen procedure that adds a positive constant offset to each element of every generator matrix, thus making all matrices strictly positive \cite{Tu69}.}. Finally, we note that matrix problems for commutative matrices are often decidable; indeed there are polynomial time algorithms for solving the orbit problem \cite{COW16, KL86} and the vector reachability problem for commutative matrices \cite{BBC94}. Since the matrices commute, it is the Parikh vector of letters of the input word which is important.

We use a reduction of Hilbert's tenth problem and various new encoding techniques to avoid the use of Turakainen's method for converting from weighted to probabilistic automata, so as to retain polynomial ambiguity. We use some techniques to move from non-strict to strict emptiness and to consider binary input alphabets. We then move on to the freeness/injectivity problem to show the following two results. 

\begin{theorem}\label{freenessthm}
The injectivity problem for linearly ambiguous four state probabilistic finite automata is undecidable.
\end{theorem}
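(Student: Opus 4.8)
The plan is to reduce from an undecidable matrix problem — the mortality/reachability or the mixed identity-type problem for $2\times 2$ integer matrices, or more directly Hilbert's tenth problem via the same style of encoding already used for Theorem~\ref{mainThm} — to the injectivity of $f_{\mathcal{P}}$. The key idea is that injectivity of $f_{\mathcal{P}}$ asks whether there exist distinct words $u,v$ with ${\bf x}^T M_u {\bf y} = {\bf x}^T M_v {\bf y}$, i.e.\ ${\bf x}^T (M_u - M_v){\bf y} = 0$. So I would engineer a PFA whose acceptance function, read off a suitable (bounded) sub-language of inputs, computes an integer-valued (after an affine rescaling) quantity $p(n_1,\dots,n_k)$ depending only on the Parikh image of the input, in such a way that a "collision'' $f_{\mathcal{P}}(u)=f_{\mathcal{P}}(v)$ for distinct $u,v$ exists if and only if the target Diophantine system has a solution. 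A standard trick to force the "distinct'' requirement to be harmless is to append a neutral letter (acting as the identity stochastic matrix on the relevant coordinates) so that for any achievable value there are automatically two distinct words realizing it; then injectivity fails unconditionally \emph{unless} we instead encode so that a collision between two \emph{genuinely different} Parikh vectors is what corresponds to a solution.

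First I would recall the construction used to prove Theorem~\ref{mainThm}: a weighted automaton over a letter-monotonic or bounded language whose weight on input $a_1^{n_1}\cdots a_k^{n_k}$ is a fixed polynomial $P(n_1,\dots,n_k)\in\mathbb{Z}[n_1,\dots,n_k]$, realized with only polynomially (here I will need \emph{linearly}) many accepting runs, and then convert it to a PFA \emph{without} Turakainen's blow-up by the new low-dimension technique announced in the abstract. For the four-state bound I would keep the degree and the number of variables small — encoding Hilbert's tenth problem in the form "does $Q(x_1,\dots)=0$ have a natural solution'' where $Q$ is chosen (via the usual reduction) to be expressible by a WFA of the right size — and I would dedicate the four states as: one absorbing/initial bookkeeping state, two states implementing a "counter-times-counter'' product needed for a degree-two monomial (linear ambiguity comes from a single branch point that duplicates at most once per input prefix), and one accepting state. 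The affine normalization that turns the integer weight into a probability in $[0,1]$ is absorbed into ${\bf x}$ and ${\bf y}$ exactly as in the WFA-to-PFA step, so no extra dimensions are spent.

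Second, I would set up the injectivity instance so that $f_{\mathcal{P}}(u)=f_{\mathcal{P}}(v)$ with $u\neq v$ forces the two Parikh vectors to differ in a controlled way and makes the equality equivalent to $Q(\cdot)=0$ having a solution — e.g.\ by reading a pair of candidate assignments in a single word and arranging the acceptance value to be a strictly monotone function of one block unless the other block certifies $Q=0$, so that two inputs collide precisely when such a certificate exists. The main obstacle, and where most of the care goes, is the tension between the two hard constraints simultaneously: getting the ambiguity down to \emph{linear} (not just polynomial) while keeping the state count at \emph{four} and still encoding a rich enough arithmetic relation; in particular, squaring or multiplying two input-driven counters typically wants quadratic ambiguity or an extra state, so I expect to need a clever single-register encoding (for instance exploiting that one of the two multiplicands can be read off a bounded-length block whose length is fixed in advance, reducing a product to a linear-in-the-other-variable computation) together with the freedom, inherent to the freeness/injectivity question, to compare two runs rather than evaluate one — the comparison "${\bf x}^T M_u{\bf y} = {\bf x}^T M_v{\bf y}$'' lets me encode the nonlinear test as a difference of two linear quantities across the two words. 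Once that encoding is pinned down, correctness is the routine verification that the collision condition unwinds to solvability of the chosen universal Diophantine instance, and the ambiguity and dimension bounds are read directly off the explicit four-state transition structure.
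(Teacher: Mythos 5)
Your proposal does not close the central gap, and the route you sketch is not the one the paper takes. The paper's proof of Theorem~\ref{freenessthm} is a reduction from the \emph{Mixed Modification PCP} (Theorem~\ref{mmpcp9}), not from Hilbert's tenth problem. It encodes a pair of words $(u,v)$ into an upper-triangular $3\times 3$ matrix whose corner entries carry $\alpha(u)$ (an integer, reverse $(n+1)$-adic value of the source word) and $\beta(v)$ (a fractional $(n+1)$-ary value of the image word); because $\gamma''(u_1,v_1)\gamma''(u_2,v_2)=\gamma''(u_1u_2,v_1v_2)$, the acceptance value of a product is $\alpha(\text{source})+\beta(\text{image})$, and a collision between two distinct products forces --- by injectivity of $\alpha$ and a digit-counting argument --- equal source words with different $h/g$ choices, i.e.\ exactly an MMPCP solution. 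Stochasticity is then obtained by rescaling and appending a single slack state (the $\delta_j$ column), giving four upper-triangular states and hence linear ambiguity. This is a \emph{positional word-equality} encoding; no arithmetic relation beyond addition of place-value representations is ever computed.

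The concrete gaps in your approach are two. First, the Diophantine route cannot fit in four states with linear ambiguity: in the paper's own Theorem~\ref{mainThm} construction, realizing a degree-$d$ monomial in $t$ variables requires matrices of dimension $(2t+3)^d$ precisely because each extra factor of a variable costs a Kronecker power; you acknowledge that ``squaring or multiplying two input-driven counters typically wants quadratic ambiguity or an extra state'' and propose ``a clever single-register encoding'' without exhibiting one --- that is the entire difficulty, left unresolved. Second, a PFA whose acceptance value depends only on the Parikh image of the input is non-injective for trivial reasons over $\Sigma^*$ (any two permutations of the same word collide), and the theorem concerns injectivity over all words, not over a bounded language; your parenthetical fix (``encode so that a collision between two genuinely different Parikh vectors is what corresponds to a solution'') does not say how to rule out the order-permutation collisions. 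The paper avoids both problems because its encoding is order-sensitive (place-value) and only needs a homomorphism into $3\times 3$ upper-triangular matrices plus one normalization state.
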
 

\begin{theorem}\label{nphardthm}
The injectivity problem for linearly ambiguous three-state probabilistic finite automata over letter monotonic languages is NP-hard. 
\end{theorem}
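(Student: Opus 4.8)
The plan is to reduce from a suitable NP-hard number-theoretic problem — the natural candidate being subset-sum or, more conveniently, the problem of deciding whether a system of linear Diophantine (in)equalities has a solution in nonnegative integers, which is NP-hard even in restricted forms. Since we work over a letter monotonic language $\mathcal{L} = a_1^* a_2^* \cdots a_k^*$, an input word $w$ is determined by its Parikh vector $(n_1,\dots,n_k) \in \N^k$, and the acceptance probability $f_{\mathcal P}(w)$ of a three-state PFA becomes a fixed closed-form expression in the $n_i$. The idea is to engineer a three-state linearly ambiguous PFA so that $f_{\mathcal P}(w)$, viewed as a function of the Parikh vector, is injective if and only if the target Diophantine instance has \emph{no} solution; equivalently, a collision $f_{\mathcal P}(w) = f_{\mathcal P}(w')$ with $w \neq w'$ encodes a witness to the NP instance. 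Linear ambiguity is what forces the transition matrices to have a single ``counting'' path per letter (e.g. an upper-triangular shape with the diagonal plus one off-diagonal entry nonzero), so that the entries of $M_{a_i}^{n_i}$ are affine or low-degree polynomial in $n_i$; this is exactly the regime in which $f_{\mathcal P}(w)$ has a controlled polynomial form amenable to encoding linear arithmetic.

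First I would fix the shape of the three states: one ``source'' coordinate that stays at probability $1$, one ``accumulator'' coordinate into which each letter $a_i$ injects a linear-in-$n_i$ contribution (via an off-diagonal entry, producing the $n_i$ factor under matrix powers while keeping the automaton linearly — not polynomially — ambiguous), and one ``sink''/normalising coordinate to keep the matrices row-stochastic. Computing ${\bf x}^T M_{a_1}^{n_1}\cdots M_{a_k}^{n_k}{\bf y}$ for this shape yields an expression of the form $c_0 + \sum_i c_i n_i$ (or a similarly simple polynomial) with rational coefficients $c_i$ chosen from the Diophantine instance. Second, I would choose the $c_i$ and the cut-point/offset so that two Parikh vectors give the same value precisely when their difference satisfies the homogeneous linear relation corresponding to the NP-hard instance, and add auxiliary letters or coordinates to enforce the nonnegativity/boundedness side conditions (this is where a second linear coordinate, or a quadratic term obtained by composing two off-diagonal hops, may be needed to encode, say, a single quadratic constraint like $x^2 = y$ used in subset-sum-style encodings). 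Third, I would apply the Turakainen-style conversion — or rather the refined low-dimension, low-ambiguity variant introduced earlier in this paper — to turn the weighted automaton with integer/rational weights into a genuine PFA on three states while preserving linear ambiguity; the paper's new technique is exactly what lets us stay at three states and avoid the exponential-ambiguity blow-up of the classical construction.

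The main obstacle I anticipate is the tension between the extremely tight resource budget (three states, linear ambiguity) and the need to encode enough arithmetic to be NP-hard: with only three coordinates and one off-diagonal hop per letter, the raw acceptance function is essentially affine in the Parikh vector, and affine-collision problems over $\N^k$ can be solvable in polynomial time, so naive affine encodings will not suffice. Overcoming this requires either (i) exploiting a quadratic term that arises from the interaction of two letters sharing the accumulator coordinate (a product $n_i n_j$ appears when the off-diagonal entries chain), letting us encode a genuinely NP-hard quadratic/subset-sum condition, or (ii) using the letter-monotonic structure together with the third coordinate as a ``guessed'' parameter so that the existence of a collision ranges over exponentially many candidate certificates. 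I would pursue route (i): show that the three-state matrices can be arranged so that $f_{\mathcal P}(w) = \alpha + \sum_i \beta_i n_i + \gamma\bigl(\sum_i \delta_i n_i\bigr)^2$ for controllable constants, and then reduce an NP-hard instance (e.g. a single quadratic Diophantine equation, or partition) to detecting a nontrivial collision of this function, finally pushing it through the paper's low-ambiguity stochastic conversion. Verifying that the conversion preserves both the three-state bound and the injectivity/collision structure — and that all arising numbers have polynomial bit-length, so the reduction is genuinely polynomial-time — is the remaining technical checkpoint.
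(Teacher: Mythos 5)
Your proposal circles the right territory (subset-sum-type problems, Parikh vectors, upper-triangular three-state matrices) but it stalls at exactly the point where the paper's proof has its one real idea, and the route you choose instead is left unexecuted. You correctly observe that over an unrestricted letter monotonic language the acceptance value of such a PFA is essentially affine in the Parikh vector, and that affine collisions over all of $\N^k$ are easy to find; but you then conclude that one must introduce a quadratic term. The paper's resolution is different and simpler: it restricts the \emph{language} rather than enriching the arithmetic. It uses two letters $a_i, b_i$ per integer $x_i$ and the letter monotonic language $\mathcal{L} = (a_1|b_1)(a_2|b_2)\cdots(a_k|b_k) \subseteq a_1^*b_1^*\cdots a_k^*b_k^*$, so that every word makes exactly one binary choice per index; the matrices $A_i$ (contributing $x_i$ to the $(1,2)$ accumulator entry) and $B_i$ (contributing $0$) are row-stochastic by construction after dividing by $x_i+1$, the common normalising factor is word-independent, and a collision between two distinct words is precisely two distinct subsets of $S$ with equal sums, i.e.\ an instance of the NP-complete \emph{equal subset sum} problem. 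No Turakainen-style conversion is needed at all, contrary to your third step.

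The gap in your proposal is therefore twofold. First, the missing idea: you never consider constraining the domain of the collision problem via the bounded/letter-monotonic language itself, which is what makes the affine encoding NP-hard and keeps everything at three states with linear ambiguity. Second, the alternative you do propose --- arranging $f_{\mathcal P}(w) = \alpha + \sum_i \beta_i n_i + \gamma\bigl(\sum_i \delta_i n_i\bigr)^2$ and reducing some quadratic Diophantine problem to collisions of this map --- is not carried out: you give no concrete matrices, no concrete source problem with a proof of NP-hardness of its collision version over $\N^k$, and no argument that the chained off-diagonal hops producing the cross terms $n_i n_j$ keep the ambiguity linear rather than quadratic (two hops through the middle state is exactly the $\text{IDA}_2$ pattern one must worry about). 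As written, the proposal is a plan with the decisive step missing, not a proof.
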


These results are proven via an encoding of the mixed modification PCP and our new encoding technique and the injectivity problem for three state PFA over letter monotonic languages is {\bf NP}-hard via an encoding of a variant of the subset sum problem and a novel encoding technique. We conclude with some open problems.

\section{Preliminaries}\label{sec-not}

\subsection{Linear Algebra}\label{linalg-sec}

Given $A = (a_{ij}) \in\F^{m\times m}$ and $B\in\F^{n\times n},$ we define the direct sum $A\oplus B$ and Kronecker product $A \otimes B$ of $A$ and $B$ by:
\[
A\oplus B=
\left[\begin{array}{@{}c|l@{}}
A & \0_{m,n}\\
\hline
\0_{n,m} & B
\end{array}\right], \quad
A\otimes B=
\left[\begin{array}{cccc}
a_{11}B & a_{12}B & \cdots & a_{1m}B \\ 
a_{21}B & a_{22}B & \cdots & a_{2m}B \\ 
\vdots & \vdots & &\vdots \\ 
a_{m1}B & a_{m2}B & \cdots & a_{mm}B \\ 
\end{array}\right],
\]
where $\0_{i,j}$ denotes the zero matrix of dimension $i \times j$. Note that neither $\oplus$ nor $\otimes$ are commutative in general. Given a finite set of matrices $\mathcal{G} = \{G_1, G_2, \ldots, G_m\} \subseteq \mathbb{F}^{n \times n}$, $\langle\mathcal{G}\rangle$ denotes the semigroup generated by $\mathcal{G}$. We will use the following notations:
$$
\bigoplus_{j=1}^{m} G_j = G_1 \oplus G_2 \oplus \cdots \oplus G_m, \qquad
\bigotimes_{j=1}^{m} G_j = G_1 \otimes G_2 \otimes \cdots \otimes G_m
$$

Given a matrix $G \in \mathbb{F}^{n \times n}$, we inductively define $G^{\otimes k} = G \otimes G^{\otimes (k-1)} \in \mathbb{F}^{n^k \times n^k}$  for $k > 0$ with $G^{\otimes 0} = 1$ as the $k$-fold Kronecker power of $G$. Similarly,  $G^{\oplus k} = G \oplus G^{\oplus (k-1)} \in \mathbb{F}^{n^k \times n^k}$ for $k > 0$ with $G^{\oplus 0}$ being a zero dimensional matrix. The rationale for the base cases is that $G \otimes G^{\otimes 0} = G \otimes 1 = G$ and that $G \oplus G^{\oplus 0} = G$ as expected.

The following properties of $\oplus$ and $\otimes$ are well known and will all be useful later. 

\begin{lemma}\label{kronprop}
Let $A, B, C, D \in \mathbb{F}^{n \times n}$. We note that:
\begin{itemize}
\item Associativity - $(A \otimes B) \otimes C = A \otimes (B \otimes C)$ and $(A \oplus B) \oplus C = A \oplus (B \oplus C)$, thus $A \otimes B \otimes C$ and $A \oplus B \oplus C$ are unambiguous.
\item Mixed product properties: $(A \otimes B)(C \otimes D) = (AC \otimes BD)$ and $(A \oplus B)(C \oplus D) = (AC \oplus BD)$.
\item If $A$ and $B$ are stochastic matrices, then so are $A \oplus B$ and $A \otimes B$. 
\item If $A, B \in \mathbb{F}^{n \times n}$ are both upper-triangular then  so are $A \oplus B$ and $A \otimes B$.
\end{itemize}
\end{lemma}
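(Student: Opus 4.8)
The statement to prove is Lemma~\ref{kronprop}, which collects four standard facts about the direct sum and Kronecker product: associativity, the mixed product property, closure of stochastic matrices, and closure of upper-triangular matrices. The plan is to verify each item by a direct computation from the block definitions given above, exploiting the self-similar block structure of both operations.

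For associativity, I would simply unfold both sides into block form. For $\otimes$, the $(i,j)$ block of $(A\otimes B)\otimes C$ is $a_{ij}(B\otimes C)$, while the corresponding block of $A\otimes(B\otimes C)$ is obtained by replacing each entry $b_{k\ell}$ of $B$ in $A\otimes B$ by $b_{k\ell}C$; both yield the array with $(i,k),(j,\ell)$ block equal to $a_{ij}b_{k\ell}C$, so they coincide. For $\oplus$, both triple sums are the block-diagonal matrix $\mathrm{diag}(A,B,C)$, which is immediate. This makes the notation $A\otimes B\otimes C$ and $A\oplus B\oplus C$ unambiguous. For the mixed product properties, I would again work blockwise: the $(i,j)$ block of $(A\otimes B)(C\otimes D)$ is $\sum_k (a_{ik}B)(c_{kj}D) = \bigl(\sum_k a_{ik}c_{kj}\bigr)BD$, which is exactly the $(i,j)$ block of $(AC)\otimes(BD)$; for $\oplus$ the off-diagonal blocks are zero on both sides and the two diagonal blocks multiply independently, giving $AC\oplus BD$.

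The stochasticity claim follows from the mixed product property together with a check on a constant vector, or more directly: each row of $A\oplus B$ is a row of $A$ padded with zeros (or a row of $B$ padded with zeros), hence nonnegative and summing to $1$; each row of $A\otimes B$ has the form $(a_{ij}\,(\text{row }k\text{ of }B))_{j}$, whose entries are nonnegative and sum to $\bigl(\sum_j a_{ij}\bigr)\bigl(\sum_\ell b_{k\ell}\bigr) = 1\cdot 1 = 1$. The upper-triangular claim is similarly structural: in $A\oplus B$ any nonzero entry lies within one of the two diagonal blocks, and within a block it respects upper-triangularity; in $A\otimes B$ the entry in block position $(i,j)$ and inner position $(k,\ell)$ equals $a_{ij}b_{k\ell}$, which can be nonzero only if $i\le j$ and $k\le\ell$, and one checks that this implies the global row index $n(i-1)+k$ does not exceed the global column index $n(j-1)+\ell$ (with strictness when $i<j$).

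None of this presents a genuine obstacle; the only mild care needed is bookkeeping with the double-indexing $((i,k),(j,\ell)) \mapsto (n(i-1)+k,\, n(j-1)+\ell)$ so that the block-level statements translate correctly to entrywise statements, and noting that the mixed product identities require all matrices to be square of the same size (as stated) so that the products $AC$, $BD$ and the Kronecker/direct sums are all well-defined. Since every part is a routine verification, I would present the argument compactly, doing the $\otimes$ computations in full blockwise detail once and remarking that the $\oplus$ cases are the obvious block-diagonal specialisation.
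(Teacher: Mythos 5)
Your proof is correct and complete. Note that the paper does not actually prove this lemma: it refers the reader to \cite{HJ91} for the first three properties and merely remarks that the fourth follows directly from the definitions. Your blockwise verification --- unfolding both operations into their block forms, checking the index bookkeeping $((i,k),(j,\ell))\mapsto(n(i-1)+k,\,n(j-1)+\ell)$ for the Kronecker product, and observing that row sums multiply for stochasticity and that $i\le j$, $k\le\ell$ force the global row index not to exceed the global column index for triangularity --- is exactly the standard argument the cited reference gives, so there is nothing to reconcile; you have simply supplied the routine details the paper chose to omit.
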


See \cite{HJ91} for proofs of the first three properties of Lemma~\ref{kronprop}. The fourth property follows directly from the definition of the Kronecker sum and product and is not difficult to prove.

\subsection{Probabilistic Finite Automata (PFA)}\label{pfa-def-sec}
A Probabilistic Finite Automaton (PFA) $\mathcal{A}$ with $n$ states over an alphabet $\Sigma$ is defined as $\mathcal{A}=({\bf x}, \{M_a|a \in \Sigma\}, {\bf y})$ where ${\bf x} \in \mathbb{R}^n$ is the initial probability distribution; ${\bf y} \in \{0, 1\}^n$ is the final state vector and each $M_a \in \mathbb{R}^{n \times n}$ is a (row) stochastic matrix. For a word $w = w_1w_2\cdots w_k \in \Sigma^*$, we define the acceptance probability $f_{\mathcal{A}}: \Sigma^* \to \mathbb{R}$ of $\mathcal{A}$ as:
$$
f_{\mathcal{A}}(w) = {\bf x}^TM_{w_1}M_{w_{2}} \cdots M_{w_k} {\bf y},
$$
which denotes the acceptance probability of $w$.\footnote{Some authors interchange the order of ${\bf x}$ and ${\bf y}$ and use column stochastic matrices, although the two definitions are trivially isomorphic.} If all transition matrices $\{M_a|a \in \Sigma\}$ commute, the the PFA is called a \emph{commutative} PFA.

For any $\lambda \in [0, 1]$ and PFA $\mathcal{A}$ over alphabet $\Sigma$, we define a cut-point language to be: $L_{\geq \lambda}(\mathcal{A}) = \{w \in \Sigma^* | f_{\mathcal{A}}(w) \geq \lambda\}$, and a strict cut-point language $L_{> \lambda}(\mathcal{A})$ by replacing $\geq$ with $>$. The (strict) emptiness problem for a cut-point language is to determine if $L_{\geq \lambda}(A) = \emptyset$ (resp. $L_{> \lambda}(A) = \emptyset$). 

Let $\alp{\ell} = \{x_1, x_2, \ldots, x_\ell\}$ be an $\ell$-letter alphabet for some $\ell>0$. A language $\mathcal{L} \subseteq \Sigma_\ell^*$ is called a \emph{bounded language} if and only if there exist words $w_1, w_2, \ldots, w_m \in \Sigma_\ell^+$ such that $\mathcal{L} \subseteq w_1^* w_2^* \cdots w_m^*$. A language $\mathcal{L}$ is called \emph{letter monotonic} if there exists letters $u_1, u_2, \ldots, u_{m} \in \alp{\ell}$ such that $\mathcal{L} \subseteq u_1^*u_2^*\cdots u_m^*$. One thus sees that letter monotonic languages are more restricted than bounded languages. We will be interested in PFA which are defined over a bounded language or a letter monotonic language $\mathcal{L}$, whereby all input words necessarily come from $\mathcal{L}$. In this case a cut-point language for a PFA $\mathcal{A}$ over bounded/letter monotonic language $\mathcal{L}$ and a probability $\lambda \in [0, 1]$ is defined as $L_{\geq \lambda, \mathcal{L}}(\mathcal{A}) = \{w \in \mathcal{L} | f_{\mathcal{A}}(w) \geq \lambda\}$; similarly for nonstrict cut point languages. We may then ask similar emptiness questions for such languages, as before.

We also study the \emph{freeness/injectivity problem} for PFA. Given a PFA $\mathcal{A}$ over alphabet $\Sigma$, determine whether the acceptance function $f_{\mathcal{A}}(w)$ is injective (i.e. do there exist two distinct words with the same acceptance probability). Such problems can readily be studied when the input words are necessarily derived from a bounded or letter monotonic language. 

\subsection{PFA Ambiguity}\label{amb-sec}

The degree of ambiguity of a finite automaton is a structural parameter, roughly indicating the number of accepting runs for a given input word \cite{WS91}. We here define only those notions required for our later proofs, see \cite{WS91} for full details of these notions and a thorough discussion. 

Let $w \in \Sigma^*$ be an input word of an NFA $\mathcal{N} = (Q, \Sigma, \delta, Q_I, Q_F)$, with $Q$ the set of states, $\Sigma$ the input alphabet, $\delta \subset Q \times \Sigma \times Q$ the transition function, $Q_I$ the set of initial states and $Q_F$ the set of final states. For each $(p, w, q) \in Q \times \Sigma^* \times Q$, let $\textrm{da}_{\mathcal{N}}(p, w, q)$ be defined as the number of all paths for $w$ in $\mathcal{N}$ leading from state $p$ to state $q$. The degree of ambiguity of $w$ in $\mathcal{N}$, denoted $\textrm{da}_{\mathcal{N}}(w)$,  is defined as the number of all \emph{accepting paths} for $w$. The degree of ambiguity of $\mathcal{N}$, denoted $\textrm{da}(\mathcal{N})$ is the supremum of the set $\{\textrm{da}_{\mathcal{N}}(w) | w \in \Sigma^*\}$. $\mathcal{N}$ is called infinitely ambiguous if $\textrm{da}(\mathcal{N}) = \infty$, finitely ambiguous if $\textrm{da}(\mathcal{N}) < \infty$, and unambiguous if $\textrm{da}(\mathcal{N}) \leq 1$. The degree of growth of the ambiguity of $\mathcal{N}$, denoted $\textrm{deg}(\mathcal{N})$ is defined as the minimum degree of a univariate polynomial $h$ with positive integral coefficients such that for all $w \in \Sigma^*$, $\textrm{da}_{\mathcal{N}}(w) \leq h(|w|)$ if such a polynomial exists, or infinity otherwise. 

The above notions relate to NFA. We may derive an analogous notion of ambiguity for PFA by considering an embedding of a PFA $\pfa$ to an NFA $\mathcal{N}$ with the property that for each letter $a \in \Sigma$, if the probability of transitioning from a state $i$ to state $j$ is nonzero under $\pfa$, then there is an edge from state $i$ to $j$ under $\mathcal{N}$ for letter $a$. The degree of (growth of) ambiguity of $\pfa$ is then defined as the degree of (growth of) ambiguity of $\mathcal{N}$. 

We may use the following notions to determine the degree of ambiguity of a given NFA (and thus a PFA by the embedding discussed above) $\mathcal{A}$ as is shown in the theorem which follows. A state $q \in Q$ is called \emph{useful} if there exists an accepting path which visits $q$. See Figure~\ref{ambfigs} for examples.

\noindent {\bf EDA} - There is a useful state $q \in Q$ such that, for some word $v \in \Sigma^*$, $da_{\mathcal{A}}(q, v, q) \geq 2$.

\noindent {\bf $\text{IDA}_d$} - There are useful states $r_1, s_1, \ldots, r_d, s_d \in Q$ and words $v_1, u_2, v_2, \ldots, u_d, v_d \in \Sigma^*$ such that for all $1 \leq \lambda \leq d$, $r_\lambda$ and $s_\lambda$ are distinct and $(r_\lambda, v_\lambda, r_\lambda), (r_\lambda, v_\lambda, s_\lambda), (s_\lambda, v_\lambda, s_\lambda) \in \delta$ and for all $2 \leq \lambda \leq d$, $(s_{\lambda-1}, u_{\lambda}, r_{\lambda}) \in \delta$.

\begin{theorem}[\cite{IR86, Re77, WS91}]\label{crtitthm}
An NFA (or PFA) $\mathcal{A}$ having the EDA property is equivalent to it being exponentially ambiguous. For any $d \in \mathbb{N}$, an NFA (or PFA) $\mathcal{A}$ having property $\text{IDA}_d$ is equivalent to $\textrm{deg}(\mathcal{A}) \geq d$.
\end{theorem}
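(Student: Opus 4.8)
This statement is the Weber--Seidl/Ibarra--Ravikumar/Reutenauer characterisation of ambiguity growth, and I would present it as four implications, of which two are short pumping arguments and two are the hard ``upper bound'' converses. The PFA case reduces verbatim to the NFA case: by definition the degree of (growth of) ambiguity of a PFA is that of its support NFA (a transition has positive probability iff it is an edge of the NFA), so throughout $\mathcal{A}=(Q,\Sigma,\delta,Q_I,Q_F)$ denotes an NFA.

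\emph{Easy directions (pumping).} For \textbf{EDA} $\Rightarrow$ exponential ambiguity: if $q$ is useful via an accepting path $p\xrightarrow{u}q\xrightarrow{u'}p'$ with $p\in Q_I$, $p'\in Q_F$, and $\textrm{da}_{\mathcal{A}}(q,v,q)\ge 2$, then in $w_n=uv^nu'$ one may independently choose either of the two distinct $q$-loops in each of the $n$ blocks of $v$, so $\textrm{da}_{\mathcal{A}}(w_n)\ge 2^n$ while $|w_n|=\Theta(n)$; hence $\textrm{da}(\mathcal{A})=\infty$ and indeed $\textrm{deg}(\mathcal{A})=\infty$. For $\text{IDA}_d\Rightarrow\textrm{deg}(\mathcal{A})\ge d$: fix the states $r_\lambda,s_\lambda$ and words $v_\lambda,u_\lambda$ witnessing $\text{IDA}_d$ and, using that $r_1$ and $s_d$ are useful, fix paths $p_I\xrightarrow{y_0}r_1$ and $s_d\xrightarrow{y_1}p_F$ with $p_I\in Q_I$, $p_F\in Q_F$; for a parameter $m$ set
\[
w_m=y_0\,v_1^{\,m}\,u_2\,v_2^{\,m}\,u_3\cdots u_d\,v_d^{\,m}\,y_1 .
\]
Reading $v_\lambda^{\,m}$ from $r_\lambda$ there are at least $m$ distinct paths to $s_\lambda$ (stay at $r_\lambda$ through the first $j$ copies, use $(r_\lambda,v_\lambda,s_\lambda)$ on copy $j{+}1$, then stay at $s_\lambda$, for $0\le j\le m-1$), and $(s_\lambda,u_{\lambda+1},r_{\lambda+1})\in\delta$ splices consecutive blocks; choosing independently over $\lambda=1,\dots,d$ gives at least $m^d$ accepting paths for $w_m$ while $|w_m|=\Theta(m)$, so no polynomial of degree $<d$ bounds $\textrm{da}_{\mathcal{A}}$, i.e.\ $\textrm{deg}(\mathcal{A})\ge d$.

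\emph{Hard directions (converses; the main obstacle).} For ``$\neg$\textbf{EDA} $\Rightarrow$ polynomial ambiguity'' and ``$\neg\text{IDA}_d\Rightarrow\textrm{deg}(\mathcal{A})<d$'' I would work in the product automata $\mathcal{A}^k$ (state set $Q^k$, an $a$-transition iff all $k$ coordinates have one) together with the collision sets $D_{ij}=\{(q_1,\dots,q_k):q_i=q_j\}$: an ordered $k$-tuple of pairwise distinct accepting runs of $\mathcal{A}$ for $w$ is exactly a run of $\mathcal{A}^k$ from $Q_I^k$ to $Q_F^k$ that leaves each $D_{ij}$ at some position, and the number of such runs both upper- and lower-bounds $\textrm{da}_{\mathcal{A}}(w)$ up to polynomial factors of degree $k$. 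One first checks that \textbf{EDA} is equivalent (after pumping loops of length $\le|Q|^2$) to the existence of a useful $q$ and an off-diagonal state $(a,b)$ of $\mathcal{A}^2$ with $(q,q)\rightsquigarrow(a,b)\rightsquigarrow(q,q)$ in $\mathcal{A}^2$; hence $\neg$\textbf{EDA} forces every relevant run of $\mathcal{A}^2$ to cross between the diagonal and its complement only $O(|Q|^2)$ times, which decomposes such a run on $w$ into a bounded concatenation of ``non-rejoining'' segments and bounds the number of pairs of distinct runs by a fixed polynomial in $|w|$ --- so $\mathcal{A}$ is polynomially ambiguous. The refinement is obtained one dimension higher, in $\mathcal{A}^{d+1}$: since \textbf{EDA} implies $\text{IDA}_d$ for every $d$ (unroll a doubled loop into a chain), $\neg\text{IDA}_d$ already gives $\neg$\textbf{EDA} and hence finite degree, and in the absence of the chained-diamond pattern of $\text{IDA}_d$ one bounds the number of $w$-runs of $\mathcal{A}^{d+1}$ that stay pairwise split in all $d{+}1$ coordinates by a polynomial whose exponent forces $\textrm{da}_{\mathcal{A}}(w)=O(|w|^{\,d-1})$, i.e.\ $\textrm{deg}(\mathcal{A})\le d-1$. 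I expect this quantitative conversion of ``no forbidden pattern'' into a bound on the number of diagonal/collision crossings in the product automata --- the Weber--Seidl combinatorics --- to be the only genuinely delicate step; the two pumping directions are immediate.
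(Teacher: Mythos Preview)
The paper does not give its own proof of this theorem: it is stated with citations to \cite{IR86, Re77, WS91} and used as a black box (e.g.\ in Lemma~\ref{polyamb}). There is therefore no in-paper argument to compare your proposal against.

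That said, your sketch is a faithful outline of the classical Weber--Seidl argument. The two pumping directions are correct as written. Your hard-direction plan via product automata $\mathcal{A}^k$ and diagonal/off-diagonal crossings is the right skeleton, and your side remark that \textbf{EDA} implies $\text{IDA}_d$ for every $d$ is indeed true (split the two distinct $q$-loops on $v$ at their first disagreement into $q\to a$, $q\to b$ on a prefix and $a\to q$, $b\to q$ on the suffix; then $w=$(suffix)(prefix) gives $(a,w,a),(a,w,b),(b,w,b)$ \emph{and} $(b,w,a)$, so one may chain $r_\lambda=a$, $s_\lambda=b$, $u_\lambda=w$ indefinitely). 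The only part that is genuinely not a proof yet is the quantitative step ``$\neg\text{IDA}_d\Rightarrow \textrm{da}_{\mathcal{A}}(w)=O(|w|^{d-1})$'': the bound on the number of diagonal crossings in $\mathcal{A}^{d+1}$ needs the full Weber--Seidl combinatorics, which you correctly flag as the delicate step rather than attempt to improvise.
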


Clearly, if $\mathcal{N}$ agrees with $\text{IDA}_d$ for some $d > 0$, then it also agrees with $\text{IDA}_1, \ldots, \text{IDA}_{d-1}$. One must be careful with these notions of ambiguity when considering NFA/PFA $\mathcal{A}$, where inputs are restricted to a bounded language $\mathcal{L}$. In such cases, the above criteria do not suffice to determine the ambiguity of  $\mathcal{A}$, since the number of paths must be determined not over $\Sigma^*$, but over words from $\mathcal{L}$. Of course, the degree of ambiguity of $\mathcal{A}$ cannot \emph{increase} by restricting to a bounded input language, but it may decrease. 

As an example, if an NFA $\mathcal{N}$ has property EDA, then there exist three words $w_1, w_2$ and $w_3$, as well as a useful state $q$ such that $w_1w_2w_3$ is an accepting word and $da_{\mathcal{N}}(q, w_2, q) \geq 2$, thus $w_1w_2w_3$ has at least two distinct accepting runs. However, this implies that $da_{\mathcal{N}}(w_1w_2^kw_3) \geq 2^k$ and thus $w_1w_2^kw_3$ has at least $2^k$ accepting runs. Now, if we are given some bounded language $\mathcal{L}$ such that $w_1w_2w_3 \in \mathcal{L}$ and $da_{\mathcal{N}}(q, w_2, q) \geq 2$ then the same implication is not possible, unless $w_2 \in \Sigma$ is a single letter, otherwise there is no guarantee that $w_1w_2^kw_3 \in \mathcal{L}$. Nevertheless, in the results of this paper we will use the standard definitions of ambiguity since the distinction is not relevant in our results as will become clear (and especially in Theorem~\ref{mainThm} for the results on commutative PFA).

\begin{figure}[h]
\centering
\begin{minipage}{0.4\textwidth}
\centering
\begin{tikzpicture}[shorten >=1pt,node distance=2cm,on grid,auto] 
   \node[state] (q_1)   {$q_1$}; 
   \node[state, initial] (q_0) [above right=of q_1] {$q_0$}; 
   \node[state, accepting](q_2) [below right=of q_0] {$q_2$};
    \path[->] 
    (q_0) 		edge [above left] node {$0:\frac{1}{2}$} (q_1)
    			edge  node {$1:\frac{1}{2}$} (q_2)
			edge [loop above] node {$\{0,1\}:\frac{1}{2}$} ()
    (q_1)  edge [loop below] node {$\{0,1\}:\frac{1}{3}$} ()
edge [below] node {$\{0,1\}:\frac{2}{3}$} (q_2)
           (q_2) edge [loop below] node {$\{0,1\}:1$} ();
\end{tikzpicture}
\end{minipage}
\begin{minipage}{0.4\textwidth}
\centering
\begin{tikzpicture}[shorten >=1pt,node distance=2cm,on grid,auto] 
   \node[state, initial] (q_0) {$q_0$}; 
   \node[state, accepting](q_1) [right=of q_0] {$q_1$};
    \path[->] 
    (q_0) 		edge [bend left] node {$a:\frac{1}{2}$} (q_1)
			edge [loop above] node {$a:\frac{1}{2}$} ()
    (q_1)  edge [bend left] node {$a:1$} (q_0);
\end{tikzpicture}
\end{minipage}
\caption{The binary PFA on the left has polynomial (quadratic) ambiguity since it does not satisfy condition EDA. Its transition matrices are upper-triangular; no transition leads from $q_j$ to $q_i$ with $i< j$. The unary PFA on the right satisfies EDA and thus it has exponential ambiguity.}  \label{ambfigs}
\end{figure}
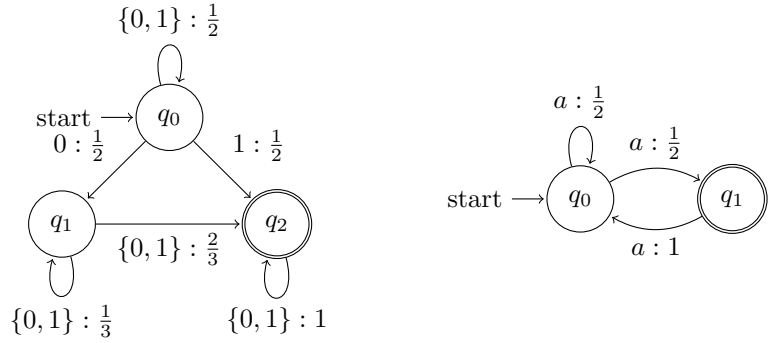

We note the following trivial lemma, which will be useful later.

\begin{lemma}\label{polyamb}
Probabilistic finite automata defined over upper-triangular matrices are polynomially ambiguous.
\end{lemma}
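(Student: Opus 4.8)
The plan is to connect the structural criteria for ambiguity from Theorem~\ref{crtitthm} with the shape of upper-triangular transition matrices. Recall that a PFA is polynomially ambiguous precisely when its underlying NFA fails the EDA property, since EDA is equivalent to exponential ambiguity; equivalently, a PFA with no EDA structure has finite growth degree, bounded by $\textrm{IDA}_d$ for the largest feasible $d$, and in particular is polynomially ambiguous. So it suffices to show that a PFA whose transition matrices $M_a$ are all upper-triangular cannot satisfy EDA.

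First I would translate the EDA condition into matrix language. The underlying NFA $\mathcal{N}$ of the PFA has an edge from state $i$ to state $j$ on letter $a$ exactly when $(M_a)_{ij} \neq 0$. If every $M_a$ is upper-triangular, then every edge of $\mathcal{N}$ goes from a state $i$ to a state $j$ with $i \le j$ (after fixing the ordering of states that makes the matrices upper-triangular). Consequently, for any word $v = v_1 \cdots v_k$, the product $M_v = M_{v_1}\cdots M_{v_k}$ is again upper-triangular (by the fourth bullet of Lemma~\ref{kronprop}, or directly), so a path for $v$ from state $q$ to state $q$ must be ``monotone'': at every step the state index is nondecreasing, and since it starts and ends at the same index $q$, it must remain equal to $q$ throughout. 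Hence there is at most one path from $q$ to $q$ labelled $v$, namely the self-loop taken $k$ times, which requires $(M_{v_i})_{qq} \neq 0$ for each $i$. In particular $\textrm{da}_{\mathcal{N}}(q, v, q) \le 1$ for every state $q$ and every word $v$, so the EDA condition $\textrm{da}_{\mathcal{N}}(q,v,q) \ge 2$ can never be met.

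Therefore $\mathcal{N}$ (and hence the PFA) does not have property EDA, so by Theorem~\ref{crtitthm} it is not exponentially ambiguous; its ambiguity growth degree is some finite $d$, and it is polynomially ambiguous. I would also remark, as the figure caption already hints, that an explicit polynomial bound can be read off: with $n$ states ordered so the matrices are upper-triangular, any accepting path visits a nondecreasing sequence of state indices and thus changes index at most $n-1$ times, which yields $\textrm{da}_{\mathcal{N}}(w) \le \binom{|w|+n-1}{n-1} = O(|w|^{n-1})$ by a standard stars-and-bars count of where the index increases; this confirms polynomial ambiguity directly without invoking Theorem~\ref{crtitthm}, and I would probably mention it in one line.

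The main obstacle is essentially bookkeeping rather than depth: one must be careful that ``upper-triangular'' is meant with respect to a common ordering of the states shared by all letters, and that the embedding from PFA to NFA is the one defined in Section~\ref{amb-sec} (edges correspond to positive, not merely nonzero-possible, probabilities). Given those conventions, the argument is the short monotonicity observation above; the lemma is correctly called trivial, and the proof is a two-sentence invocation of Theorem~\ref{crtitthm} plus the remark that upper-triangularity forbids a repeated state on a nontrivial cycle.
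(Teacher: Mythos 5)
Your proof is correct and takes essentially the same route as the paper: upper-triangularity forces every transition to go from state $i$ to state $j$ with $i \le j$, so no state $q$ can have two distinct paths from $q$ back to $q$, hence EDA fails and Theorem~\ref{crtitthm} gives polynomial ambiguity. Your additional stars-and-bars count giving an explicit $O(|w|^{n-1})$ bound is a nice self-contained confirmation, but the core argument matches the paper's.
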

\begin{proof}
Immediate from Theorem~\ref{crtitthm} and property (EDA), since a PFA defined over upper-triangular matrices clearly does not have property (EDA). This is since a transition matrix (for a letter `$a$') which is upper-triangular only defines transitions of the form $\delta(i, a) = j$ where $i \leq j$ and thus the states visited for any run are monotonically nondecreasing.
\end{proof}

\subsection{Reducible Undecidable Problems}\label{mmpcp-sec}

We will require the following undecidable problems for proving later results. The first is a variant of the famous \emph{Post's Correspondence Problem} (PCP).

\begin{prob}[Mixed Modification PCP (MMPCP)]\label{mmpcpdef}
Given a binary alphabet $\Sigma_2$, a finite set of letters $\Sigma = \{s_1, s_2, \ldots, s_{\ell}\}$,  and a pair of homomorphisms $h,g:\Sigma^*\to\Sigma_2^*,$ the MMPCP asks to decide whether there exists a word $w=x_1\dots x_k\in\Sigma^+, x_i\in\Sigma$ such that:
\[  
h_1(x_1)h_2(x_2)\dots h_k(x_k)=g_1(x_1)g_2(x_2)\dots g_k(x_k),
\]
where $h_i,g_i\in\{h,g\},$ and there exists at least one $j$ such that $h_j\neq g_j.$
\end{prob}

\begin{theorem}\hspace{-0.1cm}\cite{CKH96}\label{mmpcp9} - 
The Mixed Modification PCP is undecidable for $|\Sigma| \geq 9$.
\end{theorem}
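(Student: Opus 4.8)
To establish this result (the argument is due to \cite{CKH96}) I would give a many--one reduction from the Post Correspondence Problem (PCP). Recall that PCP is undecidable already for a pair of morphisms over a domain alphabet of $7$ letters, and more generally for domains of every size $n\ge 7$; the reduction sketched below costs two extra ``marker'' letters, which is what produces the threshold $|\Sigma|\ge 9$.

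The plan is as follows. Fix a PCP instance $(\varphi,\psi)$ with $\varphi,\psi\colon\Delta^*\to\Sigma_2^*$ and $|\Delta|=7$, which we may assume to be in a normal form with $\varphi,\psi$ non-erasing and with every solution required to begin with a fixed letter of $\Delta$. First recode the binary target, for instance by a uniform prefix code, so that every image $\varphi(a),\psi(a)$ becomes a sequence of codewords while certain binary patterns are kept ``free'' for the markers. Then take the MMPCP domain to be $\Sigma=\Delta\cup\{c_1,c_2\}$ and define $h,g\colon\Sigma^*\to\Sigma_2^*$ so that: (i) on $a\in\Delta$ the images $h(a)$ and $g(a)$ carry the recoded information of $\varphi(a)$ and $\psi(a)$; (ii) $c_1$ and $c_2$ are sent to short binary blocks, with $h(c_i)\neq g(c_i)$, chosen incompatibly enough to force any MMPCP solution to have the global shape $c_1\,x_1x_2\cdots x_m\,c_2$ with $x_i\in\Delta$; and (iii) these blocks also introduce a fixed length offset between the two sides of the equation that the middle letters must exactly compensate, which by the prefix-code recoding is possible only when the left-hand selector and the right-hand selector are each constant. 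Granting (iii), the side condition ``$h_j\neq g_j$ for some $j$'' forces the two constants to be different, i.e.\ forces the left-$h$/right-$g$ reading or its mirror, and then the matching equation $h_1(x_1)h_2(x_2)\cdots=g_1(x_1)g_2(x_2)\cdots$ collapses (up to the recoding and the marker offset) to $c_1\,\varphi(x_1\cdots x_m)\,c_2=c_1\,\psi(x_1\cdots x_m)\,c_2$, that is, to a genuine PCP solution for $(\varphi,\psi)$.

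Correctness in the forward direction is then routine: a PCP solution $w=x_1\cdots x_m$ yields the MMPCP solution $c_1\,w\,c_2$ read with the all-$h$ selector on the left and the all-$g$ selector on the right, which produces a disagreement at every position. The converse direction is the crux, and the step I expect to be the main obstacle, precisely because MMPCP allows the two sides to pick $h$ or $g$ independently at each position, so one must exclude spurious solutions that exploit this freedom. I would handle it by a ``first point of disagreement'' analysis: take the least index at which the left and right selectors differ, use the prefix-code recoding to argue that the two prefixes produced so far must coincide as $\Delta$-words, and then show that the rigidity of the marker blocks and of the coding leaves no way to complete any deviation from a pure selector pattern into an equality. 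Engineering $h$ and $g$ to be rigid enough for this argument is exactly what forces the two auxiliary marker letters, hence the passage from $7$ to $9$ in the statement, and this bookkeeping is the technical heart of \cite{CKH96}. Since PCP is undecidable for domains of every size $n\ge 7$, the same construction yields undecidability of MMPCP for every $|\Sigma|\ge 9$.
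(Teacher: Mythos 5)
You should first note that the paper does not prove this statement at all: Theorem~\ref{mmpcp9} is an external result quoted verbatim from \cite{CKH96}, so there is no in-paper argument to compare against, and your proposal has to stand on its own. On its own, it is a plan rather than a proof. The forward direction (a PCP solution $w$ gives the MMPCP solution with the all-$h$ selector on one side and the all-$g$ selector on the other) is indeed routine. But the converse direction is the entire content of the theorem, and you explicitly defer it: you say you ``would handle it by a first point of disagreement analysis'' and that engineering $h$ and $g$ to make this work ``is the technical heart of \cite{CKH96}.'' That is precisely the part that must be exhibited. Nothing in the proposal constructs concrete homomorphisms, a concrete prefix code, or concrete marker blocks for which the rigidity claims (ii) and (iii) can be verified, so the reduction is not actually given.

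Two concrete dangers illustrate why this cannot be waved through. First, if any letter $a$ satisfies $h(a)=g(a)$, then the one-letter word $a$ read with $h$ on the left and $g$ on the right already satisfies the MMPCP condition (the definition only requires the \emph{selectors} to differ at some position, not the images), producing a spurious positive instance; your construction must guarantee $h(a)\neq g(a)$ for every $a\in\Sigma$, and you never address this. Second, even with that fixed, a genuinely mixed pair of selector sequences can in principle satisfy $h_1(x_1)\cdots h_k(x_k)=g_1(x_1)\cdots g_k(x_k)$ without $x_1\cdots x_k$ encoding anything: there are $4^k$ ways to read a word of length $k$, and the claim that every equality forces the two selector sequences to be constant (and then complementary) is exactly the lemma you would need to prove. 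Your accounting ``$7$ domain letters for PCP plus $2$ markers gives $9$'' is plausible and consistent with the literature, but as written it is an assertion about what a reduction would cost, not a derivation. As it stands the proposal identifies the right obstacle and then cites it away, so it does not constitute a proof of the theorem.
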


A second useful undecidable problem is \emph{Hilbert's tenth problem}: does there exist an algorithm to determine if, for  an arbitrary integer polynomial $P(n_1, n_2, \ldots, n_k)$ with $k$ variables, there exist $x_1, x_2, \ldots, x_k \in \mathbb{Z}$ such that: $P(x_1, x_2, \ldots, x_k) = 0$? It is well known that this may be reduced to a problem in
formal power series. It was shown in \cite[p.73]{SS} that the above problem can be reduced to that of determining for a $\mathbb{Z}$-rational formal power series $S \in \zfps$, whether there exists any word $w \in A^*$ such that $(S,w) = 0$. The undecidability of this problem was shown in 1970 by Y.~Matiyasevich (building upon work of Davis, Putman, Robinson and others). For more details, see the excellent reference \cite{Mat}. We may, without loss of generality, restrict the variables to be natural numbers \cite[p.6]{Mat}.

\section{Cut-point languages for polynomially ambiguous commutative PFA}\label{boundedlangsec}

It was proven in \cite{BHH13} that the emptiness problem is undecidable for probabilistic finite automata even when input words are given over a letter monotonic language, i.e., given a PFA $\mathcal{P}$, a cutpoint $\lambda \in [0, 1]$ and a letter monotonic language $\mathcal{L}$, it is undecidable to determine if $\{w \in \mathcal{L} | f_{\pfa}(w) \Delta \lambda\}$ is empty for $\Delta \in \{\leq, <, >, \geq\}$. The constructed PFA of \cite{BHH13} has exponential ambiguity, due to the well-known Turakainen conversion of arbitrary integer matrices into stochastic matrices \cite{Tu69}. Here, we show that the emptiness problem for PFA over letter monotonic languages can also be achieved even when all matrices have polynomial ambiguity by a modified Turakainen procedure. In fact we show that the emptiness problem for PFA with commuting transition matrices is undecidable, and thus only the number, rather than the order, of the input letters matter (i.e. the input word's Parikh vector).

 The following property of the Kronecker product will also be required for the proof of Theorem~\ref{mainThm}.

\begin{lemma}\label{kronprop2}
Let $A_1, \ldots, A_\ell \in \mathbb{F}^{n \times n}$. For any index sequence $(i_1, j_1), \ldots, (i_\ell, j_\ell) \in [1, n] \times [1, n]$, there exists $1 \leq i, j \leq n^\ell$ such that: 
$$
\prod_{m = 1}^{\ell}(A_m)_{i_m, j_m} = \left(\bigotimes_{m= 1}^{\ell} A_m\right)_{i,j}
$$
\end{lemma}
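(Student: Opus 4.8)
The plan is to prove this by induction on $\ell$, using the recursive definition of the iterated Kronecker product together with the block structure of $A \otimes B$. The key observation is simply that every entry of a Kronecker product $A \otimes B$ is literally a product of one entry of $A$ and one entry of $B$: by definition the $(i,j)$ entry of $A \otimes B$, where we write $i = (p-1)n + r$ and $j = (q-1)n + s$ with $1 \le p, q, r, s \le n$, equals $a_{pq} b_{rs}$. So the claim is really just this fact iterated, and the induction makes that precise.

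First I would set up the base case $\ell = 1$, which is trivial: given the single index pair $(i_1, j_1)$, take $i = i_1$ and $j = j_1$, and the displayed equation reads $(A_1)_{i_1,j_1} = (A_1)_{i,j}$. For the inductive step, assume the statement holds for $\ell - 1$, so for the index pairs $(i_2,j_2), \ldots, (i_\ell, j_\ell)$ there exist $1 \le i', j' \le n^{\ell-1}$ with $\prod_{m=2}^{\ell}(A_m)_{i_m,j_m} = \left(\bigotimes_{m=2}^{\ell} A_m\right)_{i',j'}$. Now write $B = \bigotimes_{m=2}^{\ell} A_m \in \mathbb{F}^{n^{\ell-1} \times n^{\ell-1}}$, so that $\bigotimes_{m=1}^{\ell} A_m = A_1 \otimes B$ by associativity (Lemma~\ref{kronprop}). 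By the definition of the Kronecker product, the entry of $A_1 \otimes B$ in block position $(i_1, j_1)$ and within-block position $(i', j')$ — that is, at row $i = (i_1 - 1)n^{\ell-1} + i'$ and column $j = (j_1 - 1)n^{\ell-1} + j'$ — equals $(A_1)_{i_1,j_1} \cdot B_{i',j'}$. Combining with the inductive hypothesis gives $\left(\bigotimes_{m=1}^{\ell} A_m\right)_{i,j} = (A_1)_{i_1,j_1} \prod_{m=2}^{\ell}(A_m)_{i_m,j_m} = \prod_{m=1}^{\ell}(A_m)_{i_m,j_m}$, and one checks $1 \le i, j \le n^\ell$ since $i \le (n-1)n^{\ell-1} + n^{\ell-1} = n^\ell$ and similarly for $j$.

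There is no real obstacle here; this is a bookkeeping lemma. The only thing requiring a little care is the index arithmetic — being explicit about which entry of $A_1 \otimes B$ corresponds to the pair (block index $(i_1,j_1)$, internal index $(i',j')$) — but this is immediate from the definition of $\otimes$ given in Section~\ref{linalg-sec}. I would state the explicit formula $i = (i_1-1)n^{\ell-1} + i'$, $j = (j_1-1)n^{\ell-1} + j'$ so the reader can verify the bounds, and otherwise keep the proof to a few lines. (One could equivalently avoid induction and cite the standard mixed-product / indexing identity for iterated Kronecker products, but the short induction is self-contained and matches the level of detail in the rest of the paper.)
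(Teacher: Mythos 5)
Your proof is correct and follows essentially the same route as the paper's: induction on $\ell$, with the inductive step reading off the relevant entry of a Kronecker product from its block structure (the paper peels off the last factor $A_\ell$ rather than the first, but this is immaterial). Your version is if anything slightly more careful, since you give the explicit index formulas $i=(i_1-1)n^{\ell-1}+i'$, $j=(j_1-1)n^{\ell-1}+j'$ and verify the bounds, whereas the paper leaves the exact indices implicit.
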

\begin{proof}
The proof proceeds by induction. For the base case when $\ell=1$, we just set $(i, j) = (i_1, j_1)$ and we are done. Assume that the result holds for some $\ell-1$, then for sequence $(i_1, j_1), (i_2, j_2), \ldots, (i_{\ell-1}, j_{\ell-1})$ there exists $1 \leq i', j' \leq n^{\ell-1}$ such that:
$$
\prod_{m = 1}^{\ell-1}(A_m)_{i_m, j_m} = \left(\bigotimes_{m= 1}^{\ell-1} A_m\right)_{i',j'}
$$

By the definition of Kronecker product: 
$$
\left(\left(\bigotimes_{m= 1}^{\ell-1} A_m\right) \otimes A_{\ell}\right)_{ni'+i_\ell,nj'+j_\ell} = \prod_{m = 1}^{\ell-1}(A_m)_{i_m, j_m} \times \left(A_{\ell}\right)_{i_\ell, j_\ell}
$$
as required.
\end{proof}
Note that we can of course work out the particular value of $i$ and $j$, but in general the formula for $i, j$ does not have a nice form when $\ell>2$, and anyway will not be necessary for us, so we settle for an existential proof of such $i$ and $j$ (which can be easily computed if necessary). 

\subsection{Proof of Theorem~\ref{mainThm}}

\begin{proof} We begin with a proof sketch. We use a reduction of Hilbert's tenth problem to show our undecidability result. We first modify the Diophantine equation $P(x_1, \ldots, x_t) = 0$ to $P^h(x_0, x_1, \ldots, x_t) = 0$ such that $P^h$ is nonnegative and homogeneous (each term having the same degree), which is required for later technical reasons. We then denote $P^h$ as a sum of $r$ terms $P^h(x_0, \ldots, x_t) = \sum_{j=1}^{r} T_j(x_0, \ldots, x_t)$. For each term $T_j$, we define a set of $t+1$ integer matrices, corresponding to a $t+1$-letter weighted finite automaton\footnote{A weighted finite automaton (WFA) behaves similarly to an NFA, except edges carry integer weights which are multiplied as edges are traversed and initial and final weight functions, that are characterised as rational formal power series  \cite{Sch61}.} defined by $(u'_j, \{X_{j,\ell} | 0 \leq \ell \leq t\}, v_j)$ such that $(u'_j)^T X_{j,0}^{x_0} X_{j,1}^{x_1} \cdots X_{j,t}^{x_{t}} v'_j = T_{j}(x_0, x_1, \ldots, x_t)$. We show how to convert each such weighted automata  into a polynomially ambiguous probabilistic automata with commuting transition matrices. We then show how to combine these PFA into a larger PFA which encapsulates the sum of terms, and thus the polynomial $P^h$ and define a suitable cutpoint $\lambda$ and letter monotonic language $\mathcal{L}$ such that the non-strict emptiness problem for this PFA is undecidable. We give a technique to obtain the result for strict emptiness and then conclude by considering a binary alphabet and bounded languages. 

\noindent {\bf Encoding Hilbert's tenth problem to weighted finite automata} -  We begin by encoding an instance of Hilbert's tenth problem into a set of integer matrices. Let $P(x_1, x_2, \ldots, x_t) = 0$ be a Diophantine equation. Homogenenization of polynomials is a well known technique, as is used for example in the study of Gr\"{o}bner bases \cite{BC02}, which allows us to convert such a Diophantine equation to $P^h(x_0, x_1, x_2, \ldots, x_t) = 0$ with a new dummy variable $x_0$ such that $P^h$ is a homogeneous polynomial (each term having the same degree $d$) and for which $P^h(x_0, x_1, \ldots, x_t) = P(x_1, x_2, \ldots, x_t)$ when $x_0 = 1$. We thus assume a homogeneous Diophantine equation $P^h(x_0, x_1, \ldots, x_t) = 0$ with implied constraint $x_0 = 1$ which will be dealt with later. Furthermore, we assume that $P^h$ gives nonnegative values, which may be assumed by redefining $P^h = (P^h)^2$, which clearly does not affect whether a zero exists for such a polynomial.

Notice that given $A = \begin{pmatrix} 1 & 1 \\ 0 & 1\end{pmatrix}$, then $A^k = \begin{pmatrix} 1 & k \\ 0 & 1\end{pmatrix}$. We will generalise this property to a set of $t+1$ matrices $A_0, A_1, \ldots, A_t \in \Z^{(t+3) \times (t+3)}$ so that given any tuple $(x_0, x_1, x_2, \ldots, x_t)$, then $x_i$ appears as an element on the superdiagonal of $A_0^{x_0}A_1^{x_1}\cdots A_t^{x_t}$ for each $0 \leq i \leq t$. We will also have the property that each $A_i$ has the same row sum of $2$ for every row, which will be useful when we later convert to stochastic matrices.

We define each matrix $A_i$ for $0 \leq i \leq t+1$ in the following way:
\begin{eqnarray}
A_i = 
\begin{pmatrix}
1 & \delta_{0,i} & 0 & \cdots & 0 & 0 & 1-\delta_{0,i} \\
0 & 1 & \delta_{1,i} & \cdots & 0 & 0 & 1-\delta_{1,i} \\
0 & 0 & 1 & \cdots & 0 & 0 & 1-\delta_{2,i} \\
\vdots & \vdots & \vdots & \ddots & \vdots & \vdots & \vdots \\
0 & 0 & 0 & \cdots & 1 & \delta_{t,i} & 1-\delta_{t,i} \\
0 & 0 & 0 & \cdots & 0 & 1 & 1 \\
0 & 0 & 0 & \cdots & 0 & 0 & 2
\end{pmatrix} \in \N^{(t+3) \times (t+3)}, \label{bmat}
\end{eqnarray}
where $\delta_{\ell, i} \in \{0, 1\}$ is the Kronecker delta (thus $\delta_{i, i} = 1$ and $\delta_{\ell, i} = 0$ for $\ell \neq i$). We also denote $J = A_{t+1}$, noting that this is the matrix~(\ref{bmat}) when all $\delta_{\ell, i}$ have the value $0$. Notice then that every row sum of $A_i$ and $J$ is $2$. The overall structure of each $A_i$ is retained under matrix powers and it is easy to see that:
\begin{eqnarray}
A_i^k = 
\begin{pmatrix}
1 & k\delta_{0,i} & 0 & \cdots & 0 & 0 & 2^k-k\delta_{0,i}-1 \\
0 & 1 & k\delta_{1,i} & \cdots & 0 & 0 & 2^k-k\delta_{1,i}-1 \\
0 & 0 & 1 & \cdots & 0 & 0 & 2^k-k\delta_{2,i}-1 \\
\vdots & \vdots & \vdots & \ddots & \vdots & \vdots & \vdots \\
0 & 0 & 0 & \cdots & 1 & k\delta_{t,i} & 2^k-k\delta_{t,i}-1 \\
0 & 0 & 0 & \cdots & 0 &  1 & 2^k-1 \\
0 & 0 & 0 & \cdots & 0 & 0 & 2^k
\end{pmatrix} \in \N^{(t+3) \times (t+3)}\label{bmatk}
\end{eqnarray}

All row sums of $A_i^k$ are $2^k$ and exactly one element of the superdiagonal is equal to $k$, with all other elements on the superdiagonal (excluding that on row $t+2$) zero. Taking powers of $A_i$ will allow us to choose any nonnegative value of variable $x_i$. Note that $J^k$ has the same form as the matrix of~(\ref{bmatk}) with all $\delta_{\ell,i} = 0$ and acts as a kind of identity matrix, (in its upperleft block) while retaining the $2^k$ row sum. 

Notice that that for $0 \leq i, j \leq t+1$ with $i+1 \neq j$, then $A_iA_j = A_jA_i$, i.e. these matrices commute (similarly for $J=A_{t+1}$). This follows since in a product $A_iA_j$, the main diagonal is always $1$ excluding the bottom right element (which is always $4$) because the matrices are upper triangular. Since $i+1 \neq j$, if we exclude the right column and bottom row, $(A_i)_{1..t+2, 1..t+2} = I_{i} \oplus A \oplus I_{t-i}$ with $I_\ell$ the $\ell$-dimensional identity matrix and $A \in \N^{2 \times 2}$ defined as before. We also have that $(A_j)_{1..t+2, 1..t+2} = I_{j} \oplus A \oplus I_{t-j}$. In this case $A_i A_j$ and $A_jA_i$ have upper left block $I_i \oplus A \oplus I_{j-i-2} \oplus A \oplus I_{t-j}$. Since the right column preserves row sums then matrices $A_i$ and $A_j$ commute so long as $i + 1 \neq j$. We note that $A_i$ and $A_{i+1}$ do not commute however. Therefore, in order to get commutative matrices, we may instead use matrices $A_0, A_2, \ldots, A_{2t}, A_{2t+2}= J$. This requires an increase of dimension to $\N^{(2t+3) \times (2t+3)}$. We proceed with the proof using non-commuting $A_0, \ldots, A_{t+1}$ for ease of exposition, noting that we will later map $A_\ell \in \N^{(t+3) \times (t+3)}$ to $A_{2\ell} \in \N^{(2t+3) \times (2t+3)}$ to obtain commutativity of these matrices. We now show how to compute terms of  $P^h$.

We may write  $P^h(x_0, x_1, \ldots, x_t) = \sum_{j=1}^{r} T_j(x_0, x_1, \ldots, x_t)$, where $T_j$ denotes the $j$'th term of $P^h$, with $P^h$ having $r$ terms. Since $P^h$ is a homogeneous polynomial, each term has the same degree $d$. We may thus write each term as:
\begin{eqnarray}
T_{j}(x_0, x_1, \ldots, x_t) & = & c_{j} R_{j}(x_0, x_1, \ldots, x_t), \label{termeqn}
\end{eqnarray}
with $c_{j} \in \mathbb{Z}$ and $R_{j}(x_0, x_1, \ldots, x_t) = \prod_{\ell = 0}^{t}x_{\ell}^{r_{j,\ell}}$ with $r_{j,\ell} \geq 0$ and $\sum_{\ell=0}^{t} r_{j,\ell} = d$. For convenience, we define a $d$-dimensional vector $s_j = \bigotimes_{\ell = 0}^{t} \ell^{\otimes r_{j, \ell}} \in [0, t]^d$. For example, if $t=3, d=5$ and $T_{j}(x_0, x_1, x_2, x_3) = 6x_0x_1^2x_3^2$, then $R_{j}(x_0, x_1, x_2, x_3) = x_0^1x_1^2x_2^0x_3^2$ and thus $s_{j} = (0, 1, 1, 3, 3)^T \in [0, 3]^5$. By $s_j[i]$ we denote the $i$'th element of vector $s_j$.

We now define $t+1$ matrices corresponding to term $T_j$:
$$
X_{j,i} = \bigotimes_{\ell = 0}^{i-1}J^{\otimes r_{j,\ell}} \otimes A_i^{\otimes r_{j,i}} \otimes \bigotimes_{\ell = i+1}^{t}J^{\otimes r_{j,\ell}},
$$
where $0 \leq i \leq t$. The dimension of such matrices is $(t+3)^{d} \times (t+3)^{d}$ since each submatrix has dimension $(t+3) \times (t+3)$ and we take the $d$-fold Kronecker product. Similarly, we see that the row sum of each $X_{j, i}$ is $2^{d}$ since the row sum of each $A_i$ and $J$ is $2$ and we take a $d$-fold Kronecker product. Clearly then, by the mixed product property (see Lemma~\ref{kronprop}): 
$$
X_{j,i}^k = \bigotimes_{\ell = 0}^{i-1}(J^k)^{\otimes r_{j,\ell}} \otimes (A_i^k)^{\otimes r_{j,i}} \otimes \bigotimes_{\ell = i+1}^{t}(J^k)^{\otimes r_{j,\ell}},
$$
for any $k \geq 0$. In the example when $r_{j,0} = 1$, $r_{j,1} = 2$, $r_{j,2} = 0$, and $r_{j,3} = 2$, then $X_{j,1} = J^{\otimes 1} \otimes A_1^{\otimes 2} \otimes J^{\otimes 0} \otimes J^{\otimes 2} = J^{\otimes 1} \otimes  A_1^{\otimes 2} \otimes J^{\otimes 2}$. We then see that $X_{j,1}^k =(J^k)^{\otimes 1} \otimes  (A_1^k)^{\otimes 2} \otimes (J^k)^{\otimes 2}$. 

Now, we see that:
\begin{eqnarray}
X_{j,0}^{x_0} X_{j,1}^{x_1} \cdots X_{j,t}^{x_{t}}  & = &  \prod_{i=0}^{t} \left( \bigotimes_{\ell = 0}^{i-1}(J^{x_i})^{\otimes r_{j,\ell}} \otimes (A_i^{x_i})^{\otimes r_{j,i}} \otimes \bigotimes_{\ell = i+1}^{t}(J^{x_i})^{\otimes r_{j,\ell}} \right) \label{kronprod2pre} \\
 & = & \bigotimes_{\ell = 0}^{d} \left(D_{\ell, 0}^{x_0} D_{\ell, 1}^{x_1} \cdots D_{\ell, t}^{x_t}\right), \label{kronprod2}
\end{eqnarray}
where $D_{\ell, i} \in \{J, A_i\}$ for $0 \leq i \leq t$. The derivation of Eqn~(\ref{kronprod2}) from Eqn~(\ref{kronprod2pre}) follows by the mixed product property of the Kronecker product (Lemma~\ref{kronprop}). For each product $D_{\ell, 0}^{x_0} D_{\ell, 1}^{x_1} \cdots D_{\ell, t}^{x_t}$, we see that $D_{\ell, s_j[\ell]} = A_{s_j[\ell]}$ and $D_{\ell, j} = J$ for all $0 \leq j \leq d$ with $j \neq s_j[\ell]$. To continue our running example of $s_{j} = (0, 1, 1, 3, 3)^T \in [0, 3]^5$, we see that:
\begin{eqnarray*}
X_{j,0}^{k_0} & = & A_0^{k_0} \otimes (J^{k_0})^{\otimes 2} \otimes (J^{k_0})^{\otimes 0} \otimes (J^{k_0})^{\otimes 2}  \\
X_{j,1}^{k_1} & = & J^{k_1} \otimes (A^{k_1})^{\otimes 2} \otimes (J^{k_1})^{\otimes 0} \otimes (J^{k_1})^{\otimes 2}  \\
X_{j,2}^{k_2} & = & J^{k_2} \otimes (J^{k_2})^{\otimes 2} \otimes (A^{k_2})^{\otimes 0} \otimes (J^{k_2})^{\otimes 2}  \\
X_{j,3}^{k_3} & = & J^{k_3} \otimes (J^{k_3})^{\otimes 2} \otimes (J^{k_3})^{\otimes 0} \otimes (A^{k_3})^{\otimes 2}  \\
\end{eqnarray*}
Note that in each `column' of the Kronecker product above, we have exactly one $A_i$ matrix, with the other elements $J$ matrices. Then we see that, assuming matrices $\{A_i | 1 \leq i \leq t\}\cup J$ commute (e.g. by using our previous mapping to increase the dimension of each $A_i$ which we now assume), then by the mixed product property of Kronecker products:
\[
X_{j,0}^{k_0}X_{j,1}^{k_1}X_{j,2}^{k_2}X_{j,3}^{k_3} = (A_0^{k_0}J^{k_1+k_2+k_3}) \otimes (A_1^{k_1}J^{k_0+k_2+k_3})^{\otimes 2} \otimes (A_3^{k_3}J^{k_0+k_1+k_2})^{\otimes 2} 
\]

Back to the more general case since now $\{A_i | 1 \leq i \leq t\}\cup J$ commute,  we may thus rewrite (\ref{kronprod2}) as:
\begin{eqnarray}
X_{j,0}^{x_0} X_{j,1}^{x_1} \cdots X_{j,t}^{x_{t}}  = \bigotimes_{\ell = 0}^{d} \left(A_{s_{j}[\ell]}^{x_{s_{j}[\ell]}} J^{\overline{x_{s_{j}[\ell]}}}\right),\, \text{ where } \, \overline{x_{s_{j}[\ell]}} = \sum_{\substack{0 \leq q \leq t \\ q \neq s_j[\ell]}} x_{q}  \label{kronprod2post}
\end{eqnarray}
By Lemma~\ref{kronprop2}, we see that some element of $X_{j,0}^{x_0} X_{j,1}^{x_1} \cdots X_{j,t}^{k_{t}}$ is thus equal to $R_{j}(x_0, x_1, \ldots, x_t)$, since there is an element on the superdiagonal of $A_{s_{j}[\ell]}^{x_{s_{j}[\ell]}}J^{\overline{x_{s_{j}[\ell]}}}$, namely $(A_{s_j[\ell]}^{x_{s_j[\ell]}}J^{\overline{x_{s_{j}[\ell]}}})_{s_j[\ell], s_j[\ell]+1}$, equal to $x_{s_{j}[\ell]}$ for each $0 \leq \ell \leq d$. Let us assume that $R_{j}(x_0, x_1, \ldots, x_t)$ appears at row $i_1$ and column $i_2$. Now, we may define a vector $u'_j =  c_j  e_{i_1}$ and $v_j' = e_{i_2}$ where $c_j$ is the coefficient of term $T_j$ as in Eqn~(\ref{termeqn}) and $e_{i_1}, e_{i_2} \in \Z^{(2t+3)^{d}}$ are standard basis vectors. We may now see that: 
\begin{eqnarray}
(u'_j)^T X_{j,0}^{x_0} X_{j,1}^{x_1} \cdots X_{j,t}^{x_{t}} v'_j = c_{j} R_{j}(x_0, x_1, \ldots, x_t) = T_{j}(x_0, x_1, \ldots, x_t) \label{computeTerm}
\end{eqnarray}

In order to derive the sum of the $r$ such terms $\sum_{j=1}^{r}T_j(x_0, x_1, \ldots, x_t)$, we will utilise the \emph{direct sum}. For $0 \leq \ell \leq t$, we define $Y_{\ell}'$ by:
$$
Y'_\ell = \bigoplus_{j=1}^r X_{j, \ell} \in \N^{r(2t+3)^{d} \times r(2t+3)^{d}} 
$$

Defining $u'' = \oplus_{j=1}^{r} u'_j$ and $v'' = \oplus_{j=1}^{r} v'_j$, we now have a weighted finite automaton $(u'', \{Y'_{\ell} | 0 \leq \ell \leq t \}, v'')$ such that:
\[ P^h(x_0, x_1, \ldots, x_t) = u''^T (Y'_{0})^{x_0} (Y'_{1})^{x_1} \cdots (Y'_{t})^{x_t} v''
\] 
We now work to show how this can be converted to a probabilistic finite automaton, while retaining polynomial ambiguity and the commutativity of all matrices.

\noindent {\bf Encoding to a probabilistic finite automaton} -  
We first modify each $Y'_\ell$ so that they are row stochastic. We recall that the row sum of each $A_\ell$ and $J$ is $2$. Therefore, the row sum of each $X_{j,\ell}$ is $2^d$, since $X_{j,\ell}$ is a $d$-fold Kronecker product of $A_i$ and $J$ matrices. Then the row sum of each $Y_\ell'$ is also $2^d$ since direct sums do not modify the row sum. We thus see that $Y_{\ell} = \frac{1}{2^d}Y'_\ell$ is row stochastic. 

We now consider the coefficients of each term. We previously defined $u_j'$ by $u_j' = c_je_{i_1}$ and we may consider taking the Kronecker sum of each $u_j'$ before normalising the resulting vector (normalising according to $L^1$ norm). We face an issue however, since some coefficients $c_j$ may be negative and thus the resulting vector is not stochastic (it must be nonnegative). Fortunately we may modify a technique utilised by Bertoni \cite{Be74} to solve this issue. Given a PFA for which $u^T X v = \lambda \in [0, 1]$, then by defining $v' = {\bf 1} - v$ where ${\bf 1}$ is the all-one vector of appropriate dimension (i.e. swapping between final and non final states), then $u^T X v' = 1 - \lambda \in [0, 1]$. 

Let us define $u_j = |c_j| e_{i_1}$, which is similar to $u_j'$ defined previously, but using the absolute value of the corresponding coefficient. Now, since each $X_{j, \ell}$ has a row sum of $2^d$ and $u_j$ is of length $|c_j|$ ($L^1$ norm), then Eqn.~(\ref{computeTerm}) can be adapted to the following:
\begin{eqnarray}
(u_j)^T X_{j,0}^{x_0} X_{j,1}^{x_1} \cdots X_{j,t}^{x_{t}} ({\bf 1} - v_j) & = & |c_j| 2^{d(x_0 + x_1 + \ldots + x_t)} - |c_{j}| R_{j}(x_0, x_1, \ldots, x_t) \nonumber \\ & = & |c_j| 2^{d(x_0 + x_1 + \ldots + x_t)} +  T_{j}(x_0, x_1, \ldots, x_t) \label{computeTermNeg}
\end{eqnarray}

Let us assume, without loss of generality, that we have arranged the terms of $P^h$ such that those terms with a positive coefficient (positive terms) appear first, followed by those with a negative coefficient (negative terms). Since we have $r$ terms in $P^h$, there exists some $1 \leq r' \leq r$ such that we have $r'$ postive and $r-r'$ negative terms. 

We define $v = \bigoplus_{j=1}^{r'} v_j  \oplus \bigoplus_{j=r'+1}^{r} ({\bf 1} - v_j) \in \{0, 1\}^{r(2t+3)^{d}}$ as the final vector, so that we take the Kronecker sum of all final vectors, but we swap final and non-final states for the negative terms. 

We now define the initial vector $u$, which must be a probability distribution. 
Let $g = \sum_{j=1}^{r} |c_j|$ be the sum of absolute values of coefficients and define $u = \frac{1}{g}\bigoplus_{j=1}^{r} u_j \in [0, 1]^{r(2t+3)^{d}}$. Note that $u$ is stochastic (a probability distribution).  

We now see that:
\begin{eqnarray}
 & & u^T Y_0 Y_1^{x_1} \cdots Y_{t}^{x_t} v   \label{newForm1} \\
 & =  & \frac{\sum_{j = 1}^{r'} u_{j} \left( \bigotimes_{\ell = 0}^{d} A_{s_{j}[\ell]}^{x_{s_{j}[\ell]}} \otimes J^{\overline{x_{s_{j}[\ell]}}} \right) v_{j} + \sum_{j = r'+1}^{r} u_{j} \left( \bigotimes_{\ell = 0}^{d} A_{s_{j}[\ell]}^{x_{s_{j}[\ell]}} \otimes J^{\overline{x_{s_{j}[\ell]}}}\right) ({\bf 1} - v_j)}{g2^{d(1 + x_1+\cdots + x_t)}}\nonumber
\end{eqnarray}
Here we used the definition of matrices $Y_i$ and Eqn.~(\ref{kronprod2post}) to rewrite the expressions for $X_{j, 0} \cdots X_{j, t}$. Notice that the power of $Y_0$ (i.e. $x_0$) is set at $1$, since that constraint is required by the conversion from a standard Diophantine polynomial to a homogeneous one as explained previously. Now, using Eqn.~(\ref{computeTerm}) and Eqn.~(\ref{computeTermNeg}), we can rewrite Eqn.~(\ref{newForm1}) as:
\begin{eqnarray}
 &  & \frac{\sum_{j = 1}^{r'} T_j(1, x_1, \ldots, x_t) + \sum_{j = r'+1}^{r} \left(|c_j| 2^{d(1 + x_1+\ldots + x_t)} + T_j(1, x_1, \ldots, x_t) \right)}{g2^{d(1 + x_1+\cdots + x_t)}} \\
 & = & \frac{\sum_{j = r'+1}^{r} |c_j|}{g} + \frac{\sum_{j = 1}^{r'} T_j(1, x_1, \ldots, x_t) + \sum_{j = r'}^{r} T_j(1, x_1, \ldots, x_t) }{g2^{d(1 + x_1+\cdots + x_t)}} \\
 & = & \frac{g'}{g} + \frac{P^h(1, x_1, \ldots, x_t)}{g2^{d(1 + x_1+\cdots + x_t)}}, \label{finalForm}
\end{eqnarray} 
where $g' = \sum_{j = r'+1}^{r} |c_j|$. We therefore define $\pfa = (u, \{Y_{a} | a \in \Sigma_t \}, v)$ and $\Sigma_t = \{0, 1, \ldots, t\}$ as our PFA, with letter monotonic language $\mathcal{L} = 01^*2^*\cdots t^*$ and 
$\lambda =  \frac{g'}{g} \in [0, 1] \cap \Q$ as the cut-point. There exists some word $w = 01^{x_1}2^{x_2} \cdots t^{x_t} \in \mathcal{L}$ such that $f_\pfa(w) \leq \lambda$ if and only if $P^h(1, x_1, x_2, \ldots, x_t) = 0$. Therefore the non-strict emptiness problem for $\pfa$ is undecidable on letter monotonic languages. Since $\pfa$ is upper-triangular, then it is polynomially ambiguous.
We note the surprising fact that all generator matrices are in fact \emph{commutative} (each $X_{j,i}$ is commutative and direct sums do not affect commutativity), which leads to the undecidability of non-strict cut-points for polynomially ambiguous PFA defined over commutative matrices. In this case, the order of the input word in irrelevant, only the Parikh vector of alphabet letters is important. To remove the constraint on using letter `0' once, we may redefine $u = uY_0$ and $\mathcal{L} = 1^*2^*\cdots t^*$ to remove $Y_0$ and all constraints on $\mathcal{L}$. The result now holds for commutative PFA as required.

We have shown the undecidability of emptiness of $\{w: f_{\mathcal{P}}(w) \leq \lambda \textrm{ and } w \in \mathcal{L}\}$. It remains to show how to modify the PFA so that we obtain undecidability for inequalities $\geq, <,$ and $>$, and when the alphabet is binary (but then over a bounded language rather than letter monotonic language and for non-commuting matrices).

\noindent {\bf Emptiness for strict cutpoints is undecidable} -  Let us first prove that determining the emptiness of $\{w: f_{\mathcal{P}}(w) < \lambda \textrm{ and } w \in \mathcal{L}\}$ is undecidable; i.e. the strict emptiness problem. We proceed with a technique inspired by \cite{GO10}. Notice that for all $w \in \mathcal{L}$, then $f_{\pfa}(w)$ is of the form: 
\begin{eqnarray}
\frac{g'}{g} + \frac{P^h(1, x_1, \ldots, x_t)}{g2^{d(1 + x_1+\cdots + x_t)}} & = & \lambda + \frac{P^h(1, x_1, \ldots, x_t)}{g2^{d|w|}}, \label{xCalc}
\end{eqnarray}
as can be seen from~(\ref{finalForm}), where $\lambda = \frac{g'}{g} \in \mathbb{Q} \cap [0, 1]$ and $P^h(1, x_1, \ldots, x_t) \in \mathbb{N}$, since $P^h$ is nonnegative and Diophantine. Therefore $f_{\pfa}(w) \leq \lambda$ if and only if $f_{\pfa}(w) < \lambda + \frac{1}{g2^{d|w|}}$. Let us adapt $\pfa$ in the following way. We add three new states, denoted $q_0, q_F$ and $q_*$. State $q_0$ is a new initial state which, for any input letter, has probability $\frac{1}{2r}$ of moving to each of the $r$ initial states of $\pfa$ and probability $\frac{1}{2}$ to move to new state $q_F$. Recall that $\pfa$ has $r$ initial states, one for each term. State $q_F$ is a new final state that remains in $q_F$ for any input letter with probability $1-\frac{1}{g2^{d}}$ and moves to a new non accepting absorbing sink state $q_*$ with probability $\frac{1}{g2^{d}}$. Let us denote the new PFA $\pfa_{<}$. We now see that for any $a \in \Sigma_t$:
\[ f_{\pfa_<}(aw) = \frac{1}{2}f_{\pfa}(w) + \frac{1}{2}\left(1-\frac{1}{g^{|w|}2^{d|w|}}\right)
\]
If there exists some word $w_1 \in \mathcal{L}$ such that $f_{\pfa}(w_1) \leq \lambda$ then $f_{\pfa}(w_1) = \lambda$ and thus:
\[f_{\pfa_{<}}(aw_1) = \frac{1}{2}\lambda + \frac{1}{2}\left(1-\frac{1}{g^{|w_1|}2^{d|w_1|}}\right) < \frac{1}{2}(\lambda+ 1).
\]
For any $w_2 \in \mathcal{L}$ such that $f_{\pfa}(w_2) > \lambda$ then $f_{\pfa}(w_2) \geq \lambda + \frac{1}{g2^{d|w_2|}}$ by~(\ref{xCalc}). Thus: 
\[f_{\pfa_{<}}(aw_2) \geq \frac{1}{2}(\lambda + \frac{1}{g2^{d|w_2|}})+ \frac{1}{2}\left(1-\frac{1}{g^{|w_2|}2^{d|w_2|}}\right) > \frac{1}{2}(\lambda + 1).
\]
Therefore determining if there exists $w \in \mathcal{A}$ such that $f_{\pfa_<}(w) < \frac{1}{2}(\lambda+ 1)$, i.e. the strict emptiness problem for $\pfa_<$ on cutpoint $\frac{1}{2}(1 + \lambda)$ with letter monotonic language $\mathcal{L}$,  is undecidable as required. Note that the modifications to $\pfa$ retain polynomial ambiguity since $q_0$ and $q_F$ have no incoming (non self looping) edges and $q_*$ has no outgoing edges, therefore property EDA does not hold. We may also see that commutativity of the PFA is unaffected since $\mathcal{P}_{<}$ is identical to $\mathcal{P}$ except for adding three new states, each of which behave identically for all input letters.

Finally, let $\pfa_{\geq}$ be a PFA identical to $\pfa$ except that all final states and non-final states are interchanged. Clearly then $f_{\pfa} = 1-f_{\pfa_{\geq}}$ and thus since emptiness of $\{w: f_{\mathcal{P}}(w) \leq \lambda \textrm{ and } w \in \mathcal{L}\}$ is undecidable, we see that emptiness of $\{w: f_{\mathcal{P}_{\geq}}(w) \geq \lambda \textrm{ and } w \in \mathcal{L}\}$ is also undecidable. A similar idea shows undecidability for inequality $>$, mutatis mutandis.

\noindent {\bf Binary alphabets and bounded languages} -  We conclude this section by showing the undecidability of emptiness of polynomially ambiguous PFA over a binary alphabet and bounded languages. To do so, we utilise a modification of a standard trick. Let $\pfa_{\geq} = (u, \{Y_{a} | a \in \Sigma_t \}, v)$ as above, where the dimension of the vectors (and square matrices) is $\varsigma = r(2t+3)^d$. Let $I_k$ denote the $k \times k$ identity matrix for $k> 0$. Define $Y = Y_0 \oplus Y_1 \oplus \cdots \oplus Y_t$ and $Z = \begin{pmatrix} 0 & I_{t\varsigma} \\ I_\varsigma & 0 \end{pmatrix}$ so that $Y, Z \in \mathbb{Q}^{(t+1)\varsigma \times (t+1)\varsigma}$ and let $u' = (u^T, 0, \ldots, 0)^T$ and $v' = (v^T, 0,  \ldots, 0)^T$, with $u', v' \in \mathbb{Q}^{(t+1)\varsigma}$. It is not difficult to verify that $Z^{t+1} =I_{(t+1)\varsigma}$ and:
\[
Z^{i} Y Z^{t+1-i} = Y_i \oplus Y_{i+1} \oplus \cdots \oplus  Y_t \oplus Y_0 \oplus \cdots Y_i, 
\]
where $0 \leq i \leq t$, thus we permute the blocks of $Y$. Any product containing at least one $Y$ factor thus has a top left $\varsigma \times \varsigma$ block of either the zero matrix or some $Y_i$. For any matrix $Y_{i_1} \cdots Y_{i_p} \in \langle Y_0, \ldots, Y_t \rangle$, there exists a matrix in $\langle Y, Z \rangle$ where $Y_{i_1} \cdots Y_{i_p}$ appears as the top left block, specifically:
\[
Z^{i_1} Y Z^{t+1-i_1} \cdot Z^{i_2} Y Z^{t+1-i_2} \cdots Z^{i_p} Y Z^{t+1-i_p}
\]
Since only the first $\varsigma$ elements of $u'$ and $v'$ are nonzero, then:
\[
u'^T Z^{i_1} Y Z^{t+1-i_1} \cdot Z^{i_2} Y Z^{t+1-i_2} \cdots Z^{i_p} Y Z^{t+1-i_p} v' = u^T Y_{i_1} \cdots Y_{i_p} v
\]
If the top left $\varsigma \times \varsigma$ block of some $F \in \langle Y, Z\rangle$ is zero, then clearly $u'^T F v = 0$. Notice that $Y$ and $Z$ are stochastic matrices (though no longer commutative) and remain polynomially ambiguous (since only the product of the top left blocks of $Y, Z$ is important given that $u', v'$ are only nonzero for their first $\varsigma$ elements and the top left blocks are upper triangular), therefore the strict emptiness problem for $\pfa' = (u', \{Y, Z\}, v')$ is undecidable over bounded language $\mathcal{L}' = (z^{0}yz^{t+1})^* (z^{1}yz^{t})^* \cdots (z^{t}yz^{1})^*$ with $y$ mapping to $Y$ and $z$ mapping to $Z$.
\end{proof}

\section{Injectivity problems for polynomially ambiguous PFA}\label{freenesssec}

We now study the injectivity of acceptance probabilities of polynomially ambiguous PFA. The next result begins with an adapted proof technique from \cite{BHH13}, where the undecidability of the injectivity problem (called the freeness problem in \cite{BHH13}, although we here rename it injectivity) was shown for exponentially ambiguous PFA over five states. We show that the injectivity problem remains undecidable even when the PFA is polynomially ambiguous and over four states by using our new encoding technique (avoiding the Turakainen procedure which increases the matrix dimensions by two and generates an exponentially ambiguous PFA).

\subsection{Proof of Theorem~\ref{freenessthm}}

\begin{proof}
Let $\Sigma=\{x_1,x_2,\dots,x_{n-2}\}$ and $\Delta=\{x_{n-1},x_n\}$ be distinct alphabets and $h, g:\Sigma^* \to \Delta^*$ be an instance of the mixed modification PCP. The naming convention will become apparent below. 
We define two injective mappings $\alpha, \beta:(\Sigma \cup \Delta)^*\to\mathbb{Q}$ by: 
\[ 
     \begin{array}{l}
           \alpha(x_{i_1}x_{i_2} \cdots x_{i_m})=\Sigma_{j=1}^m i_j(n+1)^{j-1},\\
           \beta(x_{i_1}x_{i_2} \cdots x_{i_m})=\Sigma_{j=1}^m i_j(n+1)^{-j},  
     \end{array}
\]
where $\alpha(\varepsilon) = \beta(\varepsilon) = 0$ and each $1 \leq i_j \leq n$. Thus $\alpha$ represents $x_{i_1}x_{i_2} \cdots x_{i_m}$ as a reverse $(n+
1)$-adic number and $\beta$ represents $x_{i_1}x_{i_2} \cdots x_{i_m}$ as a fractional number $(0.x_{i_1}x_{i_2} \cdots x_{i_m})_{(n+1)}$ (e.g. if $n=9$, then $x_1x_2x_3$ is represented as $\alpha(x_1x_2x_3) = 321_{10}$ and $\beta(x_1x_2x_3) = 0.123_{10}$, where subscript $10$ denotes base $10$). Note that $\forall w \in (\Sigma \cup \Delta)^*, \alpha(w) \in \mathbb{N}$ and $\beta(w) \in [0,1) \cap \mathbb{Q}$. It is not difficult to see that $\forall w_1, w_2 \in (\Sigma \cup \Delta)^*, (n+1)^{|w_1|}\alpha(w_2) + \alpha(w_1) = \alpha(w_1w_2)$ and $(n+1)^{-|w_1|}\beta(w_2) + \beta(w_1) = \beta(w_1w_2)$.

Define $\gamma'':(\Sigma\cup\Delta)^*\times(\Sigma\cup\Delta)^*\to\mathbb{Q}^{3\times3}$ by:
\[
\gamma''(u,v)=\begin{pmatrix} (n+1)^{|u|} &0  & \alpha(u)\\ 0 & (n+1)^{-|v|}  & \beta(v)\\ 0& 0   & 1 \end{pmatrix}.
\]
It is easy to verify that $\gamma''(u_1,v_1)\gamma''(u_2,v_2)=\gamma''(u_1u_2,v_1v_2),$ i.e., $\gamma''$ is a homomorphism.  

Let $\mathcal{G}''=\{\gamma''(x_i,g(x_i)),\gamma''(x_i,h(x_i))|x_i\in\Sigma,1\leq i\leq n-2\}$, $\mathcal{S}'' = \langle \mathcal{G}'' \rangle$, $\rho''=(1,1,0)^T$ and $\tau''=(0,0,1)^T$. Assume that there exist $M_1 = G_{i_1} G_{i_2} \cdots G_{i_t} \in \langle\mathcal{G}''\rangle$ and $M_2 = G_{j_1} G_{j_2} \cdots G_{j_{t'}} \in \langle\mathcal{G}''\rangle$ such that $t \neq t'$ or else at least one $G_{i_p} \neq G_{j_p}$ where $1 \leq p \leq t$ and $\lambda = \rho''^T M_1 \tau'' = \rho''^T M_2 \tau''$. We see that:
\[
 \begin{array}{l}
\lambda = \rho''^TM_1\tau''= \alpha(x_{i_1}x_{i_2}\cdots x_{i_t})+\beta(f_{1}(x_{i_1})f_{2}(x_{i_2})\cdots f_{t}(x_{i_t})), \\
\lambda = \rho''^TM_2\tau''= \alpha(x_{j_1}x_{j_2}\cdots x_{j_{t'}})+\beta(f'_{1}(x_{j_1})f'_{2}(x_{j_2})\cdots f'_{t'}(x_{j_{t'}})),
\end{array}
\]
where each $f_{i}, f'_i \in\{g,h\}$. Since $\alpha(w) \in \mathbb{N}$ and $\beta(w) \in (0,1) \cap \mathbb{Q}$, $\forall w \in (\Sigma\cup\Delta)^*$, injectivity of $\alpha$ and $\beta$ implies that if $\rho''^T M_1 \tau'' = \rho''^T M_2 \tau''$, then $t=t'$ and $i_k = j_k$ for $1 \leq k \leq t$. Furthermore, if $\rho^T M_1 \tau = \rho^T M_2 \tau$, we have that $\beta(f_{1}(x_{i_1})f_{2}(x_{i_2})\cdots f_{t}(x_{i_t})) = \beta(f'_{1}(x_{i_1})f'_{2}(x_{i_2})\cdots f'_{t}(x_{i_t}))$ and since at least one $f_p \neq f'_p$ for $1 \leq p \leq t$ by our above assumption, then this corresponds to a correct solution to the MMPCP instance $(h, g)$. On the other hand, if there does not exist a solution to $(h, g)$, then $\beta(f_{1}(x_{i_1})f_{2}(x_{i_2})\cdots f_{t}(x_{i_t})) \neq \beta(f'_{1}(x_{i_1})f'_{2}(x_{i_2})\cdots f'_{t}(x_{i_t}))$, and injectivity of $\beta$ implies that $\rho''^T M_1 \tau'' \neq \rho''^T M_2 \tau''$. 

We now use our new technique to encode such matrices and vectors to a linearly ambiguous four state PFA. We first define a mapping $\gamma':(\Sigma\cup\Delta)^*\times(\Sigma\cup\Delta)^*\to\mathbb{N}^{3\times3}$ to make all matrices be nonnegative integral:
\begin{eqnarray*}
\gamma'(u,v) & = & (n+1)^{|v|} \gamma''(u, v)  = \begin{pmatrix} (n+1)^{|u|+|v|} &0  & (n+1)^{|v|}\alpha(u)\\ 0 & 1  & (n+1)^{|v|}\beta(v)\\ 0& 0   & (n+1)^{|v|} \end{pmatrix} \in \mathbb{N}^{3\times3}
\end{eqnarray*}
We next define the following morphism $\gamma:(\Sigma\cup\Delta)^*\times(\Sigma\cup\Delta)^*\to\mathbb{Q}^{4\times 4}$ to make all such matrices be row stochastic:
\[
\gamma(u,v)= (n+1)^{-k}  \begin{pmatrix} (n+1)^{|u|+|v|} &0  & (n+1)^{|v|}\alpha(u) & \delta_1 \\ 0 & 1  & (n+1)^{|v|}\beta(v) & \delta_2 \\ 0& 0   & (n+1)^{|v|} & \delta_3 \\ 0 & 0 & 0 & \delta_4 \end{pmatrix},
\]
where $\delta_j \in \mathbb{N}$ are chosen so that the row sum of each row of $\gamma(u, v)$ is $(n+1)^k$ for some $k$. Any sufficiently large $k$ can be used so long as each row has the same sum $(n+1)^k$ and thus $\gamma(u,v)$ becomes row stochastic. We use the same $k$ value for all matrices of $\mathcal{G}$ which we define as $\mathcal{G}=\{\gamma(x_i,g(x_i)),\gamma(x_i,h(x_i))|x_i\in\Sigma,1\leq i\leq n-2\}$, so that $\mathcal{S} = \langle \mathcal{G} \rangle$, and finally $\rho=(1,1,0, 0)^T$ and $\tau=(0,0,1, 0)^T$ are the initial and final state vectors respectively.

Assume that there exist $M_1 = G_{i_1} \cdots G_{i_t} \in \langle\mathcal{G}\rangle$ and $M_2 = G_{j_1} \cdots G_{j_{t'}} \in \langle\mathcal{G}\rangle$ such that $t \neq t'$ or else at least one $G_{i_p} \neq G_{j_p}$ for $1 \leq p \leq t$ and $\lambda = \rho^T M_1 \tau = \rho^T M_2 \tau$. We see that:
\[
 \begin{array}{l}
\lambda = \rho^TM_1\tau= (n+1)^{-kt} \left(\alpha(x_{i_1}x_{i_2}\cdots x_{i_t})+\beta(f_{1}(x_{i_1})f_{2}(x_{i_2})\cdots f_{t}(x_{i_t}))\right), \\
\lambda = \rho^TM_2\tau= (n+1)^{-kt'} \left(\alpha(x_{j_1}x_{j_2}\cdots x_{j_{t'}})+\beta(f'_{1}(x_{j_1})f'_{2}(x_{j_2})\cdots f'_{t'}(x_{j_{t'}}))\right),
\end{array}
\]
where each $f_{i}, f'_i \in\{g,h\}$. If $t=t'$, then the same argument as previously shows that $i_k = j_k$ for $1 \leq k \leq t$. If $t \neq t'$, assume without loss of generality that $t' < t$. In this case we see that:
\[
(n+1)^{-kt''} \left(\alpha(x_{i_1}\cdots x_{i_t})+\beta(f_{1}(x_{i_1})\cdots f_{t}(x_{i_t}))\right) = \alpha(x_{j_1}\cdots x_{j_{t'}})+\beta(f'_{1}(x_{j_1})\cdots f'_{t'}(x_{j_{t'}})),
\]
where $t'' = t-t'$. This is a contradiction however since the number of nonzero digits (where a digit is understood base $(n+1)$ here) in the left hand side of this expression is exactly $2t$, and the number of digits in the right expression is $2t' < 2t$. Note that the multiplication by $(n+1)^{-kt''}$ does not alter the number of nonzero digits, it is only a right shift of all digits, $kt''$ times. Thus, since the left and right sides have a different number of nonzero digits they cannot be equal and thus $t=t'$ as required.
\end{proof}

\subsection{Proof of Theorem~\ref{nphardthm}}

\begin{proof}
We use a reduction from the equal subset sum problem, defined thus: given a set of positive integers $S = \{x_1, x_2, \ldots, x_k\} \subseteq \N$, do there exist two disjoint nonempty subsets $S_1, S_2 \subseteq S$ such that $\sum_{\ell \in S_1} \ell = \sum_{m \in S_2} m$? This problem is known to be NP-complete \cite{WY92}. Note that although there is a requirement that the sets $S_1$ and $S_2$ be disjoint, this is not crucial so long as $S_1 \neq S_2$ (since if some element $x_j$ is in both $S_1, S_2$, then the equality also holds when $x_j$ is removed from both sets). We may therefore require that $S_1 \neq S_2$, with both nonempty such that the sum of elements of each set is identical. We define the set of matrices $M = \{A_i, B_i | 1 \leq i \leq k\} \subseteq \Q^{3 \times 3}$ in the following way:
$$
A_i = \frac{1}{x_i+1} \begin{pmatrix} 1 & x_i & 0 \\ 0 & 1 & x_i \\ 0 & 0 & x_i + 1 \end{pmatrix}, \quad B_i = \frac{1}{x_i+1} \begin{pmatrix} 1 & 0 & x_i \\ 0 & 1 & x_i \\ 0 & 0 & x_i + 1 \end{pmatrix}
$$
Note that $A_i$ and $B_i$ are thus row stochastic. Let $u = (1, 0, 0)^T$ be the initial probability distribution, $v = (0,1,0)^T$ be the final state vector and let $\pfa = (u, \{A_i, B_i\}, v)$ be our PFA. Define letter monotonic language $\mathcal{L} = (a_1 | b_1) (a_2 | b_2) \cdots (a_k | b_k) \subseteq a_1^*b_1^*a_2^*b_2^*  \cdots a_k^*b_k^*$ and define a morphism $\varphi: \{a_i, b_i | 1 \leq i \leq k\}^* \to \{A_i, B_i | 1 \leq i \leq k\}^*$ in the natural way (e.g. the morphism induced by $\varphi(a_i) = A_i$ and $\varphi(b_i) = B_i$). Now, for a word $w = w_1w_2 \cdots w_k \in \mathcal{L}$, note that $w_j \in \{a_j, b_j\}$ for $1 \leq j \leq k$. Define that $\mathfrak{v}(a_i) = x_i$ and $\mathfrak{v}(b_i) = 0$. In this case, we see that (due to the structure of $A_i$ and $B_i$):
$$u^T\varphi(w_1 w_2 \cdots w_k)v = \frac{1}{\sum_{j = 1}^{k}(x_j+1)} \sum_{\ell = 1}^{k} \mathfrak{v}(w_\ell)
$$ 
Note of course that the factor $\frac{1}{\sum_{j = 1}^{k}(x_j+1)}$ is the same for any $w \in \mathcal{L}$.

Assume then that there exists two words $\alpha, \beta \in \mathcal{L}$ with $\alpha \neq \beta$ such that $u^T\varphi(\alpha)v = u^T\varphi(\beta)v$ (i.e. assume that $\pfa$ is not injective). Then $\sum_{\ell = 1}^{k} \mathfrak{v}(\alpha_\ell) = \sum_{i \in S_1}^{k} x_i = \sum_{i \in S_2}^{k} x_i = \sum_{\ell = 1}^{k} \mathfrak{v}(\beta_\ell)$, where $S_1 = \{x_i ; |\alpha|_{a_i}>0\}$ and $S_2 = \{x_i ; |\beta|_{a_i}>0\}$. This is true if and only if the instance $S$ of the equal subset sum problem has a solution as required (note that only the empty set has a sum of zero which has unique representation $b_1\cdots b_k$). Since $A_i$ and $B_i$ are upper-triangular, with initial state $1$ and final state $2$, then $\pfa$ is linearly ambiguous.
\end{proof}

\section{Conclusion}

There are a variety of open problems remaining. For example, does Theorem~\ref{mainThm} still hold for quadratic ambiguity, when taken alongside the other constraints (letter monotonic language and commutative matrices). Another direction is to improve the complexity lower bound of Theorem~\ref{nphardthm} to show it is either PSPACE-hard, EXPSPACE-hard or undecidable, under the same constraints as in the theorem statement.

\bibliography{refs}

\end{document}